\let\@afterindentfalse\@afterindenttrue
\newtheorem{definition}{Definition}
\newtheorem{theorem}{Theorem}
\newtheorem{lemma}{Lemma}
\newtheorem{corollary}{Corollary}
\newcommand*{\N}{\mathbb{N}}
\newcommand*{\R}{\mathbb{R}}
\newcommand*{\C}{\mathbb{C}}
\newcommand*{\K}{\mathbb{K}}
\newcommand*{\ci}{\mathop{\mathrm{i}}\nolimits} 
\newcommand*{\la}[1]{\lambda_{#1}} 
\newcommand*{\dint}{\,\mathrm{d}}  
\newcommand*{\sa}[2]{\mathcal{M}_{#1,#2}^\text{sa}}
\newcommand*{\DN}[1]{\mathcal{D}_{#1}} 
\newcommand*{\B}[2]{\mathcal{B}_{#1}\gz{#2}} 
\newcommand*{\BK}{\B{1}{\K^{2\times 2}}}
\newcommand*{\BR}{\B{1}{\R^{2\times 2}}}
\newcommand*{\BC}{\B{1}{\C^{2\times 2}}}
\newcommand*{\E}[1]{\mathcal{E}_{#1}}
\newcommand*{\Tr}{\mathop{\mathrm{Tr}}\nolimits}
\newcommand*{\Vol}{\mathop{\textrm{Vol}}\nolimits}
\newcommand*{\id}{\mathop{\textrm{id}}\nolimits}
\newcommand*{\sgn}{\mathop{\textrm{sgn}}\nolimits}
\newcommand*{\Lm}[1]{L_{#1}}
\newcommand*{\Rm}[1]{R_{#1}}
\newcommand*{\1}{\mathbf{1}}
\newcommand*{\abs}[1]{\left\vert#1\right\vert}
\newcommand*{\norm}[1]{\left\Vert#1\right\Vert}
\newcommand*{\hsnorm}[1]{\left\Vert #1\right\Vert_{\text{HS}}}
\newcommand*{\gz}[1]{\left(#1\right)}
\newcommand*{\sz}[1]{\left[#1\right]}
\newcommand*{\kz}[1]{\left\{#1\right\}}
\newcommand*{\rightabs}[1]{\left.#1\right\vert}
\newcommand*{\leftsz}[1]{\left[#1\right.}
\newcommand*{\rightsz}[1]{\left.#1\right]}
\newcommand*{\leftgz}[1]{\left(#1\right.}
\newcommand*{\rightgz}[1]{\left.#1\right)}
\newcommand*{\Li}{\mathop{\mathrm{Li}}\nolimits}
\title{
Invariance of separability probability over reduced states in $4\times 4$ 
  bipartite systems
\thanks{keywords: Entanglement, Peres-Horodecki condition, Separability 
 probability, Hilbert-Schmidt measure, Monotone metrics;
MSC: 81P16, 81P40, 81P45} }
\author{
Attila Lovas\thanks{lovas@math.bme.hu}, 
Attila Andai\thanks{andaia@math.bme.hu, Phone.: +36-1-4633127, 
  Fax.:+36-1-633172}, \\
Department for Mathematical Analysis, \\
Budapest University of Technology and Economics,\\
Stoczek u. 2, Budapest, H-1521, Hungary}
\date{\today}
\begin{document}

\maketitle

\begin{abstract}
The geometric separability probability of the composite quantum systems has 
  been extensively studied in the recent decades.
One of the simplest but strikingly difficult problem is to compute the 
  separability probability of qubit-qubit and
  rebit-rebit quantum states with respect to the Hilbert-Schmidt measure.
A lot of numerical simulations confirm the $P_{\mbox{\footnotesize 
  rebit-rebit}}=\frac{29}{64}$  and 
  $P_{\mbox{\footnotesize qubit-qubit}}=\frac{8}{33}$ conjectured probabilities.
Milz and Strunz studied the separability probability with respect to given 
  subsystems.
They conjectured that the separability probability of qubit-qubit (and  
  qubit-qutrit) states of the form of
  $\begin{pmatrix} D_{1} & C\\ C^{*} & D_{2} \end{pmatrix}$ depends on 
  $D=D_{1}+D_{2}$ (on single qubit subsystems), 
  moreover it depends only on the  Bloch radii ($r$) of $D$ and it is constant 
  in $r$.
Using the Peres-Horodecki criterion for separability we give mathematical 
  proof for the $\frac{29}{64}$ probability 
  and we present an integral formula for the complex case which hopefully will 
  help to prove the $\frac{8}{33}$ probability, too.
We prove Milz and Strunz's conjecture for rebit-rebit and qubit-qubit states.
The case, when the state space is endowed with the volume form generated by the 
  operator monotone function $f(x)=\sqrt{x}$ is also studied in detail.
We show that even in this setting the Milz and Strunz's conjecture holds true 
  and we give an integral formula for separability probability according to 
  this measure.
\end{abstract}


\section{Introduction}

Since entanglement is one of the most striking features of composite quantum 
  systems, it is natural to ask what the probability is
  that a given quantum state is entangled (or separable).
''Is the world more classical or more quantum? Does it contain more quantum 
  correlated (entangled) states than classically correlated ones?'' 
  These questions were addressed to physicists in 1998 by Zyczkowski, Horodecki,   
  Sanpera and Lewenstein \cite{ZHLM}.
The first question is a rather philosophical one, the second is easier to 
  formulate mathematically, although more specification is needed.
It has turned out during the recent years that even in the simplest quantum 
  case, when one considers only qubit-qubit states over real or complex 
  Hilbert-space, to determine the separability probability of a given state is a 
  highly nontrivial problem.
Many researchers agree and emphasize the philosophical and experimental interest 
  of the separability probability.
First, one should specify a natural measure on the state space and then should 
  compute somehow the volume of the separable states and the volume of the state 
  space.
In this paper we endow the state space with Hilbert-Schmidt measure which is 
  induced by the Hilbert-Schmidt metric.
We note here that other measures are also relevant, as it was pointed out by 
  Slater \cite{SlaterSensitivity},
  mainly those which are generated by monotone metrics \cite{PetzSud96}.
The volume of the state space with respect to the Hilbert-Schmidt measure was 
  computed by Zyczkowski \cite{ZyczMixedStates}  and Andai \cite{AndaiVol}.
There are several good separability criteria, we use the Peres--Horodecki 
  criterion \cite{Horodecki1} which is a simple necessary and sufficient 
  condition for separability of qubit-qubit states.
To compute the volume of separable states is a much more complicated task.

So far only extensive numerical studies and some related conjectures have 
  existed for the separability probability.
Numerical simulations give rise to an intriguing formula for separability 
  probability, presented in 2013 by Slater \cite{slater2013concise},
  which was tested in real, complex and even in quaternionic Hilbert-spaces 
  \cite{Slater2015,SlaterDunkl,FeiJoynt}.
Based on this formula and on numerical simulations the separability probability 
  for real qubit-qubit state is $\frac{29}{64}$ and for complex state is 
  $\frac{8}{33}$.
Now we give mathematical proof for the $\frac{29}{64}$ probability and we 
  present an integral formula for the complex case which hopefully will help to 
  prove the $\frac{8}{33}$ probability, too.
One of the most useful conjecture about separability probability was presented 
  by Milz and Strunz in 2015 \cite{Milz2015}.
They conjectured that the separability probability of qubit-qubit 
  (and qubit-qutrit) states of the form of
  $\begin{pmatrix} D_{1} & C\\ C^{*} & D_{2} \end{pmatrix}$ depends on 
  $D=D_{1}+D_{2}$ (on single qubit subsystems), 
  moreover it depends only on the  Bloch radii ($r$) of $D$ and it is constant 
  in $r$.
In this paper we prove this conjecture for real and complex qubit-qubit states.  

We study the case in detail when the state space is endowed with the volume form 
  generated by the operator monotone function $f(x)=\sqrt{x}$.
We show that the volume of rebit-rebit and qubit-qubit states are infinite, 
  although there is a simple and reasonable method to define the separability 
  probabilities.
We present integral formulas for separability probabilities in this setting, 
  too.
We argue that from the separability probability point of view, the main 
  difference between the Hilbert-Schmidt measure and the volume form generated 
  by the operator monotone function $x\mapsto\sqrt{x}$ is a special distribution
  on the unit ball in operator norm of $2\times 2$ matrices, more precisely in 
  the Hilbert-Schmidt case one faces with a uniform distribution on the whole 
  unit ball and for monotone volume forms one gets uniform distribution on the 
  surface of the unit ball.
\medskip

The paper is organized as follows.
In Section 2, we fix the notations for further computations and we mention some 
  elementary lemmas which will be used in the sequel.
In Section 3, we present our main results, namely an explicit integral formula 
  for the volume of separable qubit-qubit states over real and complex
  Hilbert-space, a proof for Milz and Strunz's conjecture \cite{Milz2015} and an 
  analytical proof for the rebit-rebit separability probability.
In Section 4 we endow the state space with the volume measure which induced by 
  the operator monotone function $f(x)=\sqrt{x}$, and we show, that even in this
  setting the Milz and Strunz's conjecture holds and we give an integral formula
  for separability probability according to this measure.
As a kind of checking, in Section 5, we compute the volume of the real and 
  complex qubit-qubit state space with methods introduced in Section 3
  and we compare our results to the previously published ones.
In the second part of Section 5 we prove that the volume of the qubit-qubit 
  state space is infinite if the volume measure comes from 
  the function $f(x)=\sqrt{x}$, but still there is a natural way to define the 
  separability probability.

\section{Basic lemmas and notations}

The quantum mechanical state space consists of real and complex self-adjoint 
  positive matrices with trace $1$. We consider only the set of faithful states 
  with real and complex entries.
In our notation the state space is
\begin{equation}
\DN{n,\K}= \{D\in\K^{n\times n}|D=D^\ast ,\,D>0,\,\Tr (D) =1\} \quad \K = \R,\C.
\end{equation}
The space of $n\times n$ self-adjoint matrices is denoted by $\sa{n}{\K}$ 
  ($\K = \R,\C$). 
Let us introduce the notation $\E{\K}$ for the operator interval
\begin{equation}
\E{n,\K} = \kz{\rightabs{Y\in\sa{n}{\K}} -I<Y<I}
\end{equation}
  where "$<$" denotes the partial ordering of self-adjoint matrices defined by 
  the cone of positive matrices.

The following lemma is an essential ingredient of the proof of our main theorem. 
It gives a characterization of positive definite matrices in terms of their 
  Schur complement.
\begin{lemma}\label{lem:schur}
For any symmetric matrix, D, of the form
\begin{equation*}
D = \gz{
\begin{array}{cc}
D_1    & C \\
C^\ast & D_2
\end{array}
},
\end{equation*}
  if $D_2$ is invertible then $D>0$ if and only if $D_2>0$ and 
  $D_1-CD_2^{-1}C^\ast>0$.
Similarly, if $D_1$ is invertible then $D>0$ if and only if $D_1>0$ and 
  $D_2-C^\ast D_1^{-1}C>0$.
\end{lemma}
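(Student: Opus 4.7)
The plan is to prove the first equivalence via the block $LDL^\ast$ factorization that makes the Schur complement appear as a diagonal entry. Assuming $D_2$ invertible, I would write
\begin{equation*}
D = \gz{\begin{array}{cc} I & CD_2^{-1} \\ 0 & I \end{array}} \gz{\begin{array}{cc} D_1 - CD_2^{-1}C^\ast & 0 \\ 0 & D_2 \end{array}} \gz{\begin{array}{cc} I & 0 \\ D_2^{-1}C^\ast & I \end{array}}
\end{equation*}
and verify the identity by direct block multiplication. The two outer triangular factors are Hermitian conjugates of one another (using self-adjointness of $D_2$, so that $(CD_2^{-1})^\ast = D_2^{-1}C^\ast$) and are unit triangular, hence invertible with determinant $1$. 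Thus the display writes $D$ as a congruence $D = L M L^\ast$ with $L$ invertible and $M$ block diagonal.

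The congruence argument then runs as follows. For any nonzero vector $x$ one has $x^\ast D x = (L^\ast x)^\ast M (L^\ast x)$, and since $L^\ast$ is a bijection of the ambient space, positivity of $D$ is equivalent to positivity of $M$. Because $M$ is block diagonal and self-adjoint, $M>0$ in turn holds iff both of its diagonal blocks $D_1 - CD_2^{-1}C^\ast$ and $D_2$ are positive definite. Chaining these two equivalences yields the first claim.

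The second assertion (with $D_1$ rather than $D_2$ invertible) follows from the analogous factorization obtained by swapping the roles of the two blocks; alternatively, one may conjugate $D$ by the block permutation $\gz{\begin{array}{cc} 0 & I \\ I & 0 \end{array}}$, which interchanges $D_1$ with $D_2$ and replaces $C$ by $C^\ast$, and then invoke the first case. The lemma is a classical linear-algebraic fact and carries no real obstacle; the only items to track carefully are the block multiplications themselves and the self-adjointness of the outer triangular factors under the congruence.
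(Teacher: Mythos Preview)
Your proof is correct and is precisely the standard Schur-complement argument via the block $LDL^\ast$ factorization. The paper itself does not prove the lemma at all but simply cites p.~34 of Zhang's book on the Schur complement, where exactly this congruence argument appears; so your proposal is in fact more detailed than what the paper provides.
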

\begin{proof}
The statement is well-known in linear algebra.
For the proof see for example p. 34 in \cite{zhang05}.
\end{proof}

\begin{lemma}\label{lem:svd}
For every matrix $V\in\K^{n\times n}$ there exists a factorization,
  called a singular value decomposition of the form
\begin{equation}
V = U_1\Sigma U_2,
\end{equation}
  where $U_1,U_2\in\K^{n\times n}$ are unitary matrices and 
  $\Sigma\in\K^{n\times n}$ is a diagonal matrix with real non-negative entries.
\end{lemma}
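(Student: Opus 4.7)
The plan is to obtain the decomposition from the spectral theorem applied to $V^{*}V$. First I would note that $V^{*}V\in\K^{n\times n}$ is self-adjoint and positive semi-definite, so the spectral theorem yields a unitary $U\in\K^{n\times n}$ and a diagonal matrix $D$ with non-negative real entries such that $V^{*}V=U^{*}DU$. Taking entry-wise square roots of the diagonal of $D$ gives a diagonal matrix $\Sigma$ with non-negative real entries and $\Sigma^{2}=D$. This $\Sigma$ will serve as the central factor in the decomposition.

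Next I would set $W:=VU^{*}$ and compute
\begin{equation*}
W^{*}W=UV^{*}VU^{*}=UU^{*}DUU^{*}=D=\Sigma^{2}.
\end{equation*}
This shows that the columns of $W$ are pairwise orthogonal and the $i$-th column has squared norm $\Sigma_{ii}^{2}$. For each index $i$ with $\Sigma_{ii}>0$ the vector $u_{i}:=w_{i}/\Sigma_{ii}$ is a unit vector, and the $u_{i}$'s produced in this way are pairwise orthonormal. Extending this set by Gram--Schmidt to a full orthonormal basis $\{u_{1},\dots,u_{n}\}$ of $\K^{n}$ and letting $U_{1}$ be the matrix whose columns are these vectors yields a unitary $U_{1}$. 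A column-by-column check then shows $W=U_{1}\Sigma$, whence $V=WU=U_{1}\Sigma U$; setting $U_{2}:=U$ completes the construction.

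The only point requiring care is the presence of zero singular values. When $\Sigma_{ii}=0$ the $i$-th column of $W$ vanishes and the $i$-th column of $U_{1}$ is not determined by the identity $W=U_{1}\Sigma$; however the non-zero $u_{i}$'s span a subspace of dimension $\mathrm{rank}(\Sigma)$, so an orthonormal extension into the orthogonal complement exists and may be freely chosen. For invertible $V$ no such extension is needed, and one could equivalently define $U_{1}:=VU^{*}\Sigma^{-1}$ directly. This is the only mild obstacle; the rest is a direct consequence of the spectral theorem.
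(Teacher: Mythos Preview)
Your argument is correct and is the standard derivation of the singular value decomposition from the spectral theorem applied to $V^{*}V$. The paper itself does not give a proof but merely cites Bhatia's \emph{Matrix Analysis}, where essentially the same argument appears.
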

\begin{proof}
The proof can be found for example in Bathia's book 
  (See p. 6 in \cite{Bhatia}).
\end{proof}

\begin{lemma}\label{lem:pn}
Let $X\in\K^{n\times n}$ be an arbitrary matrix. 
The matrix $X^\ast X$ is positive semidefinite and the following equivalence 
  holds
\begin{equation}
X^\ast X<I \Leftrightarrow \norm{X}<1,
\end{equation}
  where $\norm{\cdot}$ denotes the usual operator norm i.e. the largest singular 
  value  or Schatten-$\infty$ norm.
\end{lemma}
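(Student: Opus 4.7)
The plan is to dispatch the positive semidefiniteness by the usual inner-product argument and then handle the equivalence by diagonalizing $X^\ast X$ via the singular value decomposition supplied by Lemma~\ref{lem:svd}. The underlying fact to exploit is that unitary conjugation preserves the Löwner order, so that all inequalities can be pushed through to a diagonal computation in terms of the singular values of $X$.

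First, for any $v\in\K^{n}$ one has $\langle v,X^\ast X v\rangle=\langle Xv,Xv\rangle=\norm{Xv}^{2}\ge 0$, which immediately shows that $X^\ast X$ is positive semidefinite. Next I invoke Lemma~\ref{lem:svd} to write $X=U_{1}\Sigma U_{2}$ with $U_{1},U_{2}$ unitary and $\Sigma=\mathrm{diag}(\sigma_{1},\dots,\sigma_{n})$ a diagonal matrix with $\sigma_{i}\ge 0$. Then
\begin{equation*}
X^\ast X = U_{2}^{\ast}\Sigma U_{1}^{\ast}U_{1}\Sigma U_{2}=U_{2}^{\ast}\Sigma^{2}U_{2}.
\end{equation*}

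Since $I=U_{2}^{\ast}IU_{2}$, the condition $X^\ast X<I$ is equivalent to $U_{2}^{\ast}(I-\Sigma^{2})U_{2}>0$. Conjugation by the invertible unitary $U_{2}$ preserves positive definiteness (as $\langle w,U_{2}^{\ast}AU_{2}w\rangle=\langle U_{2}w,AU_{2}w\rangle$ and $U_{2}$ is a bijection on $\K^{n}\setminus\{0\}$), so this is in turn equivalent to $\Sigma^{2}<I$. Because $I-\Sigma^{2}$ is diagonal with entries $1-\sigma_{i}^{2}$, positive definiteness reduces to $\sigma_{i}^{2}<1$ for every $i$, i.e. $\max_{i}\sigma_{i}<1$. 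By the definition of the operator norm as the largest singular value, this is precisely $\norm{X}<1$, which closes the chain of equivalences.

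There is no real obstacle here; the only point worth stating carefully is the step that lets one strip off the unitaries in the Löwner inequality, since strict positivity must be preserved in both directions. This follows from $U_{2}$ being an invertible isometry, so everything else is bookkeeping with the SVD and the definition of $\norm{\cdot}$.
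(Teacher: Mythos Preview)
Your argument is correct. The paper takes a shorter, more direct route: it simply recalls the variational characterization $\norm{X}^{2}=\sup_{\norm{v}=1}\norm{Xv}^{2}=\sup_{\norm{v}=1}\langle v,X^{\ast}Xv\rangle$ and reads off the equivalence from there (the passage from $X^{\ast}X<I$ to a \emph{strict} bound on the supremum being implicit, via compactness of the unit sphere in finite dimensions). Your approach instead invokes Lemma~\ref{lem:svd} and reduces everything to the diagonal matrix $\Sigma^{2}$ by unitary conjugation. This is slightly heavier machinery for such an elementary fact, but it buys you something: both implications and the strictness are made fully explicit, with no hidden appeal to compactness, since the Löwner inequality collapses to the entrywise condition $\sigma_{i}^{2}<1$. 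Either line is standard and adequate here.
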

\begin{proof}
The inequality $\left<v,X^\ast X v\right>=\norm{Xv}^2\ge 0$ holds for all 
  $v\in\K^n$ which proves the first part of the statement.
By the definition of operator norm, we can write
\begin{equation*}
\norm{X}^2 
  =\sup\kz{\rightabs{\norm{Xv}^2}v\in\K^n,\,\norm{v}=1}\le 1
\end{equation*}
  because $\norm{Xv}^2=\left<v,X^\ast X v\right>\le \norm{v}^2=1$ for every 
  vector $v$ of length $1$.
\end{proof}

To a matrix $D\in\K^n$, one can associate the left and right multiplication 
  operators $\Lm{D},\Rm{D}:\K^{n\times n}\to\K^{n\times n}$ that acts like
\begin{align*}
A \mapsto \Lm{D} (A) &= DA \\
A \mapsto \Rm{D} (A) &= AD. 
\end{align*}
It is obvious that $\Lm{D}$ and $\Rm{D}$ are invertible if and only if 
  $D\in\text{Gl}(n,\K)$. By a straightforward computation, one can show that
\begin{equation*}
\det (\Lm{D}) = \det (\Rm{D}) = \det (D)^n.
\end{equation*} 
In integral transformations, the $n\times n$ complex matrix $D$ is regarded as a 
  $2n\times 2n$ real matrix that acts on $\R^{2n}\cong\C^n$ therefore the
  Jacobian of  $\Lm{D}$ and $\Rm{D}$ is $\det (D)^{2n}$ in the complex case.

The vector space of $n\times n$ matrices is the direct sum of  the space of 
  $n\times n$ self-adjoint matrices and $n\times n$ anti self-adjoint matrices 
\begin{equation*}
\K^{n\times n} =  \sa{n}{\K} \oplus \widetilde{\sa{n}{\K}}.
\end{equation*}
For any self-adjoint matrix $D\in\sa{n}{\K}$, the map 
  $\Lm{D}\circ\Rm{D}:\K^n\times\K^n$ preserves the direct sum decomposition i.e. 
  $\Lm{D}\circ\Rm{D} \gz{\sa{n}{\K}}\subset \sa{n}{\K}$ and 
  $\Lm{D}\circ\Rm{D} \gz{\widetilde{\sa{n}{\K}}}\subset   
  \widetilde{\sa{n}{\K}}$.
Consequently, 
  $\Lm{D}\circ\Rm{D}=\rightabs{\gz{\Lm{D}\circ\Rm{D}}}_{\sa{n}{\K}} \oplus
  \rightabs{\gz{\Lm{D}\circ\Rm{D}}}_{ \widetilde{\sa{n}{\K}}}$ holds which 
  implies that
\begin{equation}
\det\gz{\Lm{D}\circ\Rm{D}} =
\det\gz{\rightabs{\gz{\Lm{D}\circ\Rm{D}}}_{\sa{n}{\K}} } \times \det\gz{\rightabs{\gz{\Lm{D}\circ\Rm{D}}}_{\widetilde{\sa{n}{\K}}}}.
\end{equation}
This observation lead us to the following lemma.
\begin{lemma}\label{lem:adjmod}
Let $D\in\sa{2}{\K}$ be an arbitrary positive definite matrix. The determinant 
  of the restricted map   
  $\rightabs{\gz{\Lm{D^{1/2}}\circ\Rm{D^{1/2}}}}_{\sa{2}{\K}}$ is 
\begin{equation*}
\det\gz{\rightabs{\gz{\Lm{D^{1/2}}\circ\Rm{D^{1/2}}}}_{\sa{2}{\K}}}\
=\det (D)^{2-\frac{d}{2}},
\end{equation*}
  where $d=\dim_\R \K =1,2$.
\end{lemma}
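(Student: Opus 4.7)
The plan is to exploit the direct sum decomposition $\K^{2\times 2} = \sa{2}{\K} \oplus \widetilde{\sa{2}{\K}}$ recalled just above the lemma: since $T := \Lm{D^{1/2}} \circ \Rm{D^{1/2}}$ respects this splitting, one has
\begin{equation*}
\det(T) = \det\gz{\rightabs{T}_{\sa{2}{\K}}} \cdot \det\gz{\rightabs{T}_{\widetilde{\sa{2}{\K}}}},
\end{equation*}
so the self-adjoint Jacobian equals the total Jacobian divided by the anti-self-adjoint one, and the proof reduces to computing those two factors.

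For the total Jacobian I would invoke the formula stated right before the lemma: the real Jacobian of $\Lm{D}$ and of $\Rm{D}$ on $\K^{n\times n}$ equals $\det(D)^n$ for $\K = \R$ and $\det(D)^{2n}$ for $\K = \C$. Substituting $D^{1/2}$ for $D$ and composing gives $\det(T)$ as an explicit power of $\det(D)$ for $n = 2$ in each case.

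The anti-self-adjoint factor is handled differently in the two base fields. In the real case $\widetilde{\sa{2}{\R}}$ is one-dimensional, spanned by $J = \gz{\begin{array}{cc} 0 & 1 \\ -1 & 0 \end{array}}$, and the classical $2 \times 2$ identity $M^\top J M = \det(M) J$ applied to the symmetric matrix $M = D^{1/2}$ yields $T(J) = \det(D)^{1/2} J$, giving the factor directly. In the complex case, multiplication by $\ci$ is a real-linear isomorphism $\sa{2}{\C} \to \widetilde{\sa{2}{\C}}$ that intertwines $T$ with itself (because $\ci$ commutes with left and right multiplication by $D^{1/2}$), so the two restricted determinants are equal and the anti-self-adjoint factor is recovered as the square root of the total Jacobian.

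Combining these with the factorization above and simplifying the resulting powers of $\det(D)$ yields the claimed exponent uniformly in $d \in \{1, 2\}$. The main delicate step is the real case, whose short computation rests on the identity $M^\top J M = \det(M) J$; the complex case is almost formal once one spots the $\ci$-twist, and no diagonalization or singular-value reduction of $D$ is required because every ingredient has already been expressed as a closed-form power of $\det(D)$.
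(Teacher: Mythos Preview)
Your proposal is correct and follows essentially the same approach as the paper: both use the factorization $\det(T)=\det\bigl(T|_{\sa{2}{\K}}\bigr)\cdot\det\bigl(T|_{\widetilde{\sa{2}{\K}}}\bigr)$, compute the anti-self-adjoint factor directly on the one-dimensional space $\R J$ in the real case, and use the isomorphism $\widetilde{\sa{2}{\C}}=\ci\,\sa{2}{\C}$ to identify the two restricted determinants in the complex case. Your justification of $D^{1/2}JD^{1/2}=\det(D)^{1/2}J$ via the classical identity $M^{\top}JM=\det(M)\,J$ is slightly more explicit than the paper's ``one can verify'', but the structure of the argument is identical.
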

\begin{proof}
In the real case, 
  $\widetilde{\sa{2}{\R}}=\R \begin{pmatrix}0 & -1 \\1 & 0\end{pmatrix}$. 
One can verify that 
\begin{equation*}
\Lm{D^{1/2}}\circ\Rm{D^{1/2}} 
  =\det(D)^{1/2}\begin{pmatrix}0 & -1 \\1 & 0 \end{pmatrix}
\end{equation*}
  hence
\begin{align*}
\det\gz{\rightabs{\gz{\Lm{D^{1/2}}\circ\Rm{D^{1/2}}}}_{\sa{2}{\R}}} 
&=\frac{\det\gz{\Lm{D^{1/2}}\circ\Rm{D^{1/2}}}}{\det
  \gz{\rightabs{\gz{\Lm{D^{1/2}}\circ\Rm{D^{1/2}}}}_{\widetilde{\sa{2}{\R}}}}}\\ 
&=\frac{\det (D)^2}{\det (D)^{1/2}} = \det (D)^{3/2}.
\end{align*}
In the complex case, we have an isomorphism 
  $\widetilde{\sa{2}{\C}}=\ci\sa{2}{\C}$ and thus
\begin{equation*}
\det\gz{\rightabs{\gz{\Lm{D^{1/2}}\circ\Rm{D^{1/2}}}}_{\sa{2}{\R}}} 
  =\sqrt{\det \gz{\Lm{D^{1/2}}\circ\Rm{D^{1/2}}}} = \det (D)
\end{equation*}
  which completes the proof.
\end{proof}

Note that the Jacobian of the transformation
\begin{equation}
\rightabs{\gz{\Lm{D^{1/2}}\circ\Rm{D^{1/2}}}}_{\sa{2}{\K} }:
  \sa{2}{\K}\to \sa{2}{\K}
\end{equation}
  is $\det (D)^{2d-d^2/2}$ because $D$ is regarded in integral transformations as a
  $4\times 4$ real matrix that acts on $\R^4\cong\C^2$.

\begin{lemma}\label{lem:svratio}
Let $A$ be a $2\times 2$ invertible matrix with singular values 
  $\sigma_1>\sigma_2>0$.
The operator norm and the singular value ratio of $A$, which is defined as 
  $\sigma (A):=\sigma_2/\sigma_1$, can be expressed as follows
\begin{align*}
\norm{A} &= \sqrt{\abs{\det(A)}}
  e^{\frac{1}{2}\cosh^{-1}\gz{\frac{1}{2}\frac{\hsnorm{A}^2}{|\det (A)|}}}\\
\sigma (A)=\frac{\sigma_2}{\sigma_1} &
  =e^{-\cosh^{-1}\gz{\frac{1}{2}\frac{\hsnorm{A}^2}{|\det (A)|}}},
\end{align*}
  where $\hsnorm{\cdot}$ denotes the Hilbert-Schmidt norm.
\end{lemma}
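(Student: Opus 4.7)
The plan is to parametrise everything through the singular values and then recognise the right-hand sides as encoded hyperbolic identities. By Lemma~\ref{lem:svd}, $A = U_1\Sigma U_2$ with $\Sigma=\mathrm{diag}(\sigma_1,\sigma_2)$, and since $U_1,U_2$ are unitary they preserve both the operator norm and the Hilbert--Schmidt norm, and leave $|\det A|$ unchanged up to a unit-modulus factor. Therefore
\begin{equation*}
\norm{A}=\sigma_1,\qquad \abs{\det(A)}=\sigma_1\sigma_2,\qquad \hsnorm{A}^2=\sigma_1^2+\sigma_2^2.
\end{equation*}

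Next I would compute the dimensionless quantity that appears inside $\cosh^{-1}$. Dividing,
\begin{equation*}
\frac{\hsnorm{A}^2}{\abs{\det(A)}}=\frac{\sigma_1^2+\sigma_2^2}{\sigma_1\sigma_2}=\frac{\sigma_1}{\sigma_2}+\frac{\sigma_2}{\sigma_1}=\sigma(A)^{-1}+\sigma(A).
\end{equation*}
Setting $t=-\ln\sigma(A)>0$ (which is legitimate because $0<\sigma_2<\sigma_1$, so $\sigma(A)\in(0,1)$), the right-hand side becomes $e^t+e^{-t}=2\cosh t$. Hence
\begin{equation*}
t=\cosh^{-1}\gz{\tfrac{1}{2}\tfrac{\hsnorm{A}^2}{\abs{\det(A)}}},
\end{equation*}
and exponentiating gives $\sigma(A)=e^{-t}$, which is the second formula.

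Finally, for the operator norm I would use $\sigma_1^2=\sigma_1\sigma_2\cdot(\sigma_1/\sigma_2)=\abs{\det(A)}\cdot\sigma(A)^{-1}=\abs{\det(A)}\,e^{t}$, so taking square roots yields
\begin{equation*}
\norm{A}=\sqrt{\abs{\det(A)}}\,e^{t/2}=\sqrt{\abs{\det(A)}}\,e^{\frac{1}{2}\cosh^{-1}\gz{\frac{1}{2}\frac{\hsnorm{A}^2}{\abs{\det(A)}}}},
\end{equation*}
as claimed.

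There is really no serious obstacle here: everything reduces to the algebraic identity $\sigma+\sigma^{-1}=2\cosh(\ln\sigma^{-1})$ together with bookkeeping via the SVD. The only point that needs a word of care is checking that the argument of $\cosh^{-1}$ is indeed $\geq 1$ so that $t$ is real and positive; this follows from the AM-GM inequality $\sigma_1^2+\sigma_2^2\geq 2\sigma_1\sigma_2$, with strict inequality because $\sigma_1>\sigma_2$ is assumed.
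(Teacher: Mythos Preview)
Your proof is correct and is essentially the same argument as the paper's, just run in the opposite direction: the paper computes $\sigma_{1,2}$ as the eigenvalues of $\sqrt{A^\ast A}$ via the quadratic formula (using $\Tr(A^\ast A)=\hsnorm{A}^2$ and $\det(A^\ast A)=|\det A|^2$) and then recognises $x\pm\sqrt{x^2-1}=e^{\pm\cosh^{-1}x}$, whereas you start from $\sigma_1,\sigma_2$, express $\hsnorm{A}^2$ and $|\det A|$ in terms of them, and back out the hyperbolic identity. The underlying algebra is identical.
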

\begin{proof}
By definition, singular values of $A$ are the eigenvalues of $\sqrt{A^\ast A}$
  that are 
\begin{align*}
\sigma_{1,2}&=\sqrt{|\det (A)|}
  \gz{\frac{\hsnorm{A}^2}{2|\det (A)|}\pm
  \sqrt{\gz{\frac{\hsnorm{A}^2}{2|\det (A)|}}^2-1}}^{1/2} \\
&=\sqrt{|\det (A)|}e^{\pm\frac{1}{2}
  \cosh^{-1}\gz{\frac{1}{2}\frac{\hsnorm{A}^2}{|\det (A)|}}}
\end{align*}
  which completes the proof.
\end{proof}

The standard unit ball in the normed vector space of $2\times 2$ matrices is 
  denoted by $\BK$ and the notation $\partial\BK$ stands for the surface of 
  the unit ball. 
We set the notation $\1_A$ for the indicator function of the set
  $A\subseteq\K^{2\times 2}$.

\begin{definition}
The functions  $\chi_d,\eta_d : [0,\infty)\to [0,\infty)$
  are defined by the following formulas
\begin{align}
\chi_d(\varepsilon)&=\int\limits_{\BK}
  \1_{\norm{V_{\varepsilon}^{-1} X V_{\varepsilon}}<1}\dint\la{4d}(X),\\
\eta_d(\varepsilon)&=\int\limits_{\BK}
  \det(I-XX^\ast)^{-\frac{3d}{4}-\frac{1}{2}}
  \1_{\norm{V_{\varepsilon}^{-1} X V_{\varepsilon}}<1}\dint\la{4d}(X),
\end{align}
  where $V_{\varepsilon}=\begin{pmatrix}1&0\\0&\varepsilon\end{pmatrix}$ 
  and $d=\dim_\R (\K)$. 
\end{definition}
Clearly, these functions are reciprocal symmetric i.e.
  $\chi_d (1/\varepsilon)=\chi_d(\varepsilon)$ and
  $\eta_d (1/\varepsilon)=\eta_d (\varepsilon)$ holds for $\varepsilon>0$.
The normalized $\chi_d$-function 
  $\tilde{\chi}_d(\varepsilon)=\chi_d(\varepsilon)/\chi_d(1)$
  measures the probability that a uniformly distributed matrix in   
  $\BK$ is mapped in $\BK$ by the similarity transformation 
  $V_{\varepsilon}^{-1} (.) V_{\varepsilon}$. 
The normalized $\tilde{\eta}_d(\varepsilon)=\eta_d (\varepsilon)/\eta_d (1)$ 
  function would have a similar probabilistic meaning, but we will see that 
  $\eta_d (1)=\infty$ therefore we should find an other way to calculate 
  $\tilde{\eta}_d(\varepsilon)$. 

\begin{lemma}\label{lem:chi}
The function $\tilde{\chi}_1 (\varepsilon ):\sz{0,1}\to\sz{0,1}$
  can be expressed as follows
\begin{align}
\begin{split}
\tilde{\chi}_1 (\varepsilon )&=1-\frac{4}{\pi^2}\int\limits_\varepsilon^1 
\gz{s+\frac{1}{s}-\frac{1}{2}\gz{s-\frac{1}{s}}^2\log \gz{\frac{1+s}{1-s}}}
  \frac{1}{s}\dint s \\
&=\frac{4}{\pi^2}\int\limits_0^\varepsilon
\gz{s+\frac{1}{s}-\frac{1}{2}\gz{s-\frac{1}{s}}^2\log \gz{\frac{1+s}{1-s}}}
  \frac{1}{s}\dint s. 
\end{split}
\end{align}
\end{lemma}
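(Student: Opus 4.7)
\emph{Plan.} My plan is to separate the radial and angular parts of the integrand defining $\chi_{1}(\varepsilon)$ via a $2\times 2$-specific polar parametrisation, reduce to a two-dimensional angular integral, and finally evaluate it.

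Writing $X = \begin{pmatrix}a & b\\ c & d\end{pmatrix}$, I pass to the orthogonal coordinates $p_{1}=a+d$, $p_{2}=a-d$, $q_{1}=b-c$, $q_{2}=b+c$ and then to double polar coordinates via $p_{i} = \sqrt{2}\,r_{i}\cos\phi_{i}$, $q_{i} = \sqrt{2}\,r_{i}\sin\phi_{i}$. A direct computation yields the identities
\begin{equation*}
\hsnorm{X}^{2} = r_{1}^{2}+r_{2}^{2},\qquad \det X = \tfrac{1}{2}(r_{1}^{2}-r_{2}^{2}),\qquad \norm{X} = \tfrac{1}{\sqrt{2}}(r_{1}+r_{2}),
\end{equation*}
the last one because the squared singular values are the roots of $t^{2} - \hsnorm{X}^{2}\,t + (\det X)^{2} = 0$, which factorises as $(t - (r_{1}+r_{2})^{2}/2)(t - (r_{1}-r_{2})^{2}/2)$. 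The Jacobian computation gives $\dint\la{4}(X) = r_{1}r_{2}\,\dint r_{1}\,\dint r_{2}\,\dint\phi_{1}\,\dint\phi_{2}$. Applying the same parametrisation to $M := V_{\varepsilon}^{-1}XV_{\varepsilon}$ leaves $p_{1},p_{2}$ unchanged and maps $(q_{1},q_{2}) \mapsto (\tilde q_{1},\tilde q_{2})$ by the unit-determinant matrix whose diagonal entries are $\tfrac{1}{2}(\varepsilon+1/\varepsilon)$ and off-diagonal entries are $\tfrac{1}{2}(\varepsilon-1/\varepsilon)$; in particular $\det M = \det X$.

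Next, set $\rho = \norm{X}$, $\tilde X = X/\rho$, and $u = (r_{1}-r_{2})/(r_{1}+r_{2}) \in [-1,1]$ (a signed singular-value ratio). Then $\dint\la{4}(X) = \tfrac{1}{2}\rho^{3}(1-u^{2})\,\dint\rho\,\dint u\,\dint\phi_{1}\,\dint\phi_{2}$, and with $c(\tilde X,\varepsilon) := \norm{V_{\varepsilon}^{-1}\tilde X V_{\varepsilon}}$ both operator-norm constraints become $\rho < \min(1,1/c)$. The $\rho$-integration therefore produces
\begin{equation*}
\chi_{1}(\varepsilon) = \tfrac{1}{8}\int_{-1}^{1}(1-u^{2})\int_{0}^{2\pi}\!\!\int_{0}^{2\pi}\min(1,c^{-4})\,\dint\phi_{1}\,\dint\phi_{2}\,\dint u,
\end{equation*}
which at $\varepsilon=1$ already yields $\chi_{1}(1) = 2\pi^{2}/3$. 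A short manipulation (using Lemma \ref{lem:svratio} to express $c^{2}$ via $\hsnorm{M}^{2}$ and $\det X$, then applying half-angle identities on $\psi_{\pm}=\phi_{1}\pm\phi_{2}$) shows that the condition $c<1$ is equivalent to
\begin{equation*}
(U+\eta_{1}V)(U+\eta_{2}V) < 0,\qquad U = (1+u)\sin\tfrac{\phi_{1}+\phi_{2}}{2},\ V = (1-u)\sin\tfrac{\phi_{1}-\phi_{2}}{2},
\end{equation*}
with $\eta_{1} = (1-\varepsilon)/(1+\varepsilon)$, $\eta_{2}=1/\eta_{1}$, and moreover $c^{4}$ depends on the angles only through $w = (U+\eta_{1}V)(U+\eta_{2}V)$.

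The remaining task, which is the main technical obstacle, is to evaluate the 2D angular integral. Substituting $(y,z) = (\sin(\psi_{+}/2),\sin(\psi_{-}/2))$ turns the angular measure into $4\,\dint y\,\dint z/\sqrt{(1-y^{2})(1-z^{2})}$ on $[-1,1]^{2}$, in which $W_{i} = U+\eta_{i}V$ are linear. Since the integrand is a function of $w$ alone, a further linear change of variables to $(W_{1},W_{2})$ lets one integrate out the fibre coordinate along $\{W_{1}W_{2}=w\}$; this arc-type integration is where the primitive producing $\log((1+s)/(1-s))$ appears. Combining with the final $u$-integration and normalising by $\chi_{1}(1)$ yields the formula of the lemma. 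The bookkeeping in the angular integration is the delicate step; as an alternative, one may differentiate $\tilde\chi_{1}$ in $\varepsilon$ to reduce the computation to a 1D distributional integration on the critical surface $\{W_{1}W_{2}=0\}$ and then recover $\tilde\chi_{1}(\varepsilon)$ by integration from the boundary value $\tilde\chi_{1}(1)=1$.
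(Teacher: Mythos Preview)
Your setup is correct and is a genuinely different parametrisation from the one the paper uses. The paper covers $\R^{2\times 2}$ by charts of the form $X_{\pm}(r,t,\rho,\phi)=rY_{\pm}(t,\rho,\phi)$ with $Y_{\pm}\in\mathrm{SL}_2(\R)$, arranged so that conjugation by $V_{e^{-\delta}}$ acts as the \emph{translation} $t\mapsto t-\delta$. This turns the problem into computing a single ``defect'' integral $\Delta(\delta)=\Vol(\BR)-\chi_1(e^{-\delta})$, and the simultaneous operator-norm constraints reduce to the elementary dichotomy $|t-\delta|\gtrless |t|$. Your double-polar/$(p_i,q_i)$ parametrisation instead makes $\norm{X}$ and $\det X$ transparent and gives a clean radial integration, at the price of pushing all the $\varepsilon$-dependence into a two-dimensional angular integral. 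Both routes are viable in principle; the paper's buys a one-line reduction to a defect function, yours buys a more symmetric setup but leaves a harder endgame.

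That endgame is where your proposal has a genuine gap. After the sentence ``The remaining task, which is the main technical obstacle, is to evaluate the 2D angular integral'' you do not actually evaluate it: you describe a substitution $(y,z)=(\sin(\psi_{+}/2),\sin(\psi_{-}/2))$, assert that a further linear change to $(W_1,W_2)$ permits an ``arc-type integration'' producing $\log((1+s)/(1-s))$, and then stop. Several things are missing or unclear. First, the measure $4\,\dint y\,\dint z/\sqrt{(1-y^{2})(1-z^{2})}$ is not invariant under the linear change $(y,z)\mapsto(W_1,W_2)$, so the fibre integration is not simply an arc length and you have not shown how the weight interacts with the level sets $\{W_1W_2=w\}$. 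Second, the claim that $c^{4}$ depends on the angles only through $w=W_1W_2$ is not justified; since $\norm{M}^{2}=\tfrac12\bigl(\hsnorm{M}^{2}+\sqrt{\hsnorm{M}^{4}-4(\det X)^{2}}\bigr)$ and $\det X$ involves $u$ but not the angles, while $\hsnorm{M}^{2}-\hsnorm{X}^{2}$ is a quadratic form in $(q_1,q_2)$, one needs an explicit check that the square-root term does not introduce a second independent angular combination. Third, the final ``combining with the $u$-integration'' is entirely absent, and this is where the specific rational structure $s+1/s-\tfrac12(s-1/s)^{2}\log((1+s)/(1-s))$ must emerge; nothing in the sketch explains how the $u$-variable and $\varepsilon$ conspire to produce exactly this integrand in the variable $s$. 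The alternative route you mention (differentiate in $\varepsilon$ and integrate back) is a good idea and is closer in spirit to what the paper actually does via $\Delta'(\delta)$, but again you have only named it, not executed it. As written, the proposal is a plausible outline, not a proof.
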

\begin{proof} 
The proof, which is elementary but somewhat lengthy, can be found in 
  Appendix \ref{App:AppendixA}.
\end{proof}

The function $\tilde{\chi}_1 (\varepsilon )$ can be written in a closed form
  using polylogarithmic functions but this is unnecessary for our purposes.
It is somewhat interesting, that the identity function approximates well
  $\tilde{\chi}_1 (\varepsilon )$ (See Fig. \ref{fig:1}).
\begin{figure}[!ht]
\centering
\includegraphics[width=0.75\linewidth]{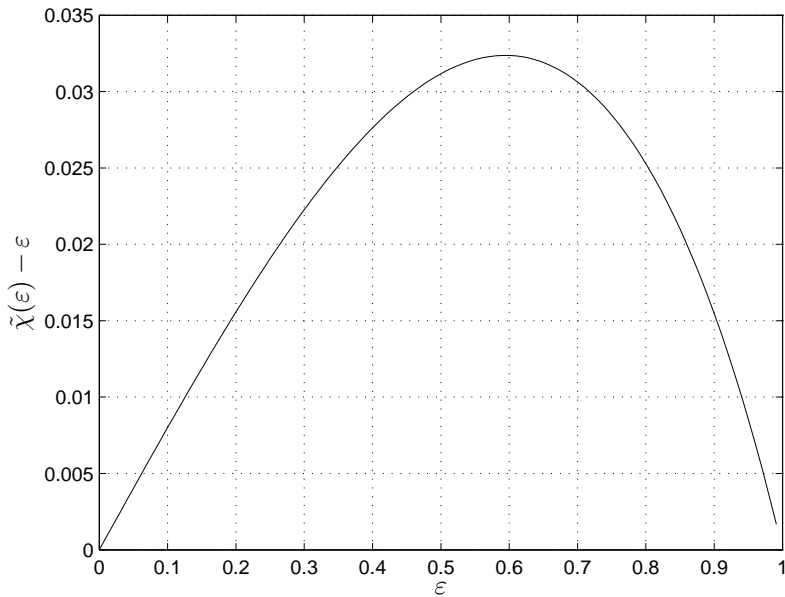}
\label{fig:1}
\caption{Graph of the function $\varepsilon\mapsto\tilde{\chi}_1 (\varepsilon)-\varepsilon$.}
\end{figure}

Recall that Pauli matrices 
$\sigma_1=\begin{pmatrix} 0& 1 \\ 1& 0\end{pmatrix}$, 
$\sigma_2=\begin{pmatrix} 0& -\ci \\ \ci& 0\end{pmatrix}$ and
$\sigma_3=\begin{pmatrix}1 & 0 \\0 & -1\end{pmatrix}$ with 
  $I=\begin{pmatrix}1 & 0 \\0 & 1\end{pmatrix}$
  form an orthogonal basis of the space of $2\times 2$ self-adjoint matrices. 
We parametrize the spaces $\sa{2}{\R}$ and $\sa{2}{\C}$ in the following way
\begin{align}
R(\theta,x,y) &= \frac{x+y}{2}I+\frac{x-y}{2}(\cos(\theta)    
  \sigma_1+\sin(\theta) \sigma_3), \label{eq:sa1}\\
& 0<\theta<2\pi,\, x,y\in\R \nonumber \\ 
R(\theta,\phi,x,y)&=\frac{x+y}{2}I+\frac{x-y}{2}(\cos(\theta)\sin(\phi) 
  \sigma_1+\sin(\theta)\sin(\phi)\sigma_2+\cos(\phi)\sigma_3), \label{eq:sa2}\\
& 0<\theta<2\pi,\, 0<\phi<\pi,\, x,y\in\R. \nonumber
\end{align}
This parametrization is very convenient because eigenvalues of
  $R(\theta,x,y)$ and $R(\theta,\phi,x,y)$ can be directly read out from the 
  parametrization.

We introduce the notation $O(\phi)$ for the standard $2\times 2$ rotation 
  matrix that rotates points counter-clockwise through an angle $\phi$ about 
  the origin.
Let us denote by $\Lambda (x,y)$ the $2\times 2$ diagonal matrix that contains
  $x,y$ in its diagonal. 

Let us introduce the parametrization of $U(2)$ \eqref{eq:U2} that can be found 
  in Mirman's book  (See p. 284--285 in $\cite{Mirman}$).
\begin{align}\label{eq:U2}
\begin{split}
&U(\Theta,\Phi,\omega,\tau) =
e^{\ci\Theta} \times
\begin{pmatrix}
e^{\frac{\ci (\omega+\tau)}{2}}\cos \frac{\Phi}{2} 
  & \ci e^{\frac{\ci (\omega-\tau)}{2}}\sin \frac{\Phi}{2}\\
\ci e^{-\frac{\ci (\omega-\tau)}{2}}\sin \frac{\Phi}{2} 
  & e^{-\frac{\ci (\omega+\tau)}{2}}\cos \frac{\Phi}{2}
\end{pmatrix} \\
&0<\Phi<\pi,\, 0<\Theta<2\pi,\, 0<\omega,\,\tau<4\pi
\end{split}
\end{align}
Polar decomposition will be utilized to parametrize the space of $2\times 2$ 
  complex matrices. 
The space of $2\times 2$ real and complex density matrices will be 
  parametrized by the canonical Bloch sphere parametrization as follows.
\begin{align}
D(\theta,r) &= \frac{1}{2}(I+r(\cos(\theta)\sigma_1+\sin (\theta)\sigma_3)), 
  \label{eq:DR}\\
            & 0<\theta<2\pi,\, 0<r<1 \nonumber \\
D(\theta,\phi,r) &= \frac{1}{2}(I+r(\cos(\theta) \sin (\phi) \sigma_1
  +\sin(\theta)\sin (\phi)\sigma_2+\cos(\phi) \sigma_3)), \label{eq:DC}\\
            & 0<\theta<2\pi,\, 0<\phi<\pi,\, 0<r<1\nonumber
\end{align}

In Table \ref{tab:detg}, we collected the parameterizations of
  manifolds $\R^{2\times 2}$, $\C^{2\times 2}$, $\sa{2}{\R}$, $\sa{2}{\C}$,
  $\DN{2,\R}$, $\DN{2,\C}$ and volume forms corresponding to the considered 
  parametrization. 
These formulas will be applied in the sequel without mentioning.

\begingroup
\renewcommand{\arraystretch}{1.5}
\begin{table}[!ht]
\centering
\begin{tabular}{|c|m{6.5cm}|c|}
\hline
Manifold & Parametrization & Volume form \\ 
\hline
$\R^{2\times 2}$ & $O(\phi)\Lambda (x,y) O(\theta)$, 
  \mbox{$0<\phi,\theta<2\pi$, $0<x,y$}& $\frac{|x^2-y^2|}{2}$ \\ 
\hline
$\C^{2\times 2}$ & $R(\theta,\phi,x,y) U(\Theta,\Phi,\omega,\tau)$, 
  where $0<x,y$ (See Equations \eqref{eq:sa2} and \eqref{eq:U2}.)
  & $\frac{x y (x^2-y^2)^2}{64}\sin \phi \sin\Phi$\\ 
\hline
$\sa{2}{\R}$     & $R(\theta,x,y)$  (See Equation \eqref{eq:sa1})
  & $\frac{|x-y|}{\sqrt{2}}$\\ 
\hline
$\sa{2}{\C}$     & $R(\theta,\phi,x,y)$ (See Equation \eqref{eq:sa2}.)
  & $\frac{(x-y)^2}{2}\sin \phi$ \\ 
\hline
$\DN{2,\R}$      & $D(\theta,r)$ (See Equation \eqref{eq:DR}.) 
  & $\frac{r}{2}$\\ 
\hline
$\DN{2,\C}$      & $R(\theta,\phi,x,y)$ (See Equation \eqref{eq:DC}.)
  & $\frac{r^2\sin\phi}{2\sqrt{2}}$\\ 
\hline
\end{tabular}
\label{tab:detg}
\caption{Parametrization of manifolds $\R^{2\times 2}$, $\C^{2\times 2}$,
  $\sa{2}{\R}$, $\sa{2}{\C}$ and the corresponding volume forms.}
\end{table}
\endgroup

In Table \ref{tab:normc}, we summarize the normalization constants 
  corresponding to $\chi_d$ and $\eta_d$ $d=1,2$.
\begingroup
\renewcommand{\arraystretch}{1.5}
\begin{table}[!ht]
\centering
\begin{tabular}{|c|c|c|}
\hline 
& $d=1$       & $d=2$ \\ \hline
$\chi_d (1) =$ & $\frac{2}{3}\pi^2$  & $\frac{\pi^4}{6}$ \\ \hline
$\eta_d (1) =$ & $\infty$    & $\infty$ \\ \hline
\end{tabular}
\label{tab:normc}
\caption{Normalization constants corresponding to $\chi_d$ and $\eta_d$
  $d=1,2$.}
\end{table}
\endgroup

As an example, we calculate here $\chi_2 (1)$.
By definition, we can write
\begin{equation*}
\chi_2 (1)=\int\limits_{\BC}
  \1_{\Vert V_{1}^{-1} X \underbrace{V_{1}}_{=\id_{\C^2}}\Vert <1}
  \dint\la{4d}(X)
=\int\limits_{\BC}1\dint\la{4d}(X).
\end{equation*}
Now we apply the parametrization and volume form presented in Table 
  \ref{tab:detg} and we obtain
\begin{align*}
\chi_2 (1)&= \int\limits_{\BC}1\dint\la{4d}(X) \\
&=4^3\pi^4\int\limits_0^1\int\limits_0^1\int\limits_0^\pi 
  \int\limits_0^\pi\frac{x y (x^2-y^2)^2}{64}\sin \phi \sin\Phi
  \dint\phi\dint\Phi\dint y\dint x \\
&=4\pi^4\int\limits_0^1\int\limits_0^1 xy(x^2-y^2)^2 \dint y\dint x 
  =\frac{\pi^4}{6}.
\end{align*}

To make the explanation precise, we define $\tilde{\eta}_d(\varepsilon)$ as
\begin{equation*}
\tilde{\eta}_d(\varepsilon)=\lim\limits_{\delta\to 1-0}
  \frac{\int\limits_{\BK}
  \det (I-XX^\ast)^{-\gz{\frac{3d}{4}-\frac{1}{2}}\delta}
  \1_{\norm{V_{\varepsilon}^{-1} X V_{\varepsilon}}<1}
  \dint\la{4d}(X)}{\int\limits_{\BK}
  \det (I-XX^\ast)^{-\gz{\frac{3d}{4}-\frac{1}{2}}\delta}\dint\la{4d}(X)}
\end{equation*}
  which limit exists because the measures
\begin{align*}
&\frac{\gz{(1-x^2)(1-y^2)}^{-\frac{5\delta}{4}}\abs{x^2-y^2}}
  {\int\limits_0^1\int\limits_0^1\gz{(1-t^2)(1-s^2)}^{-\frac{5\delta}{4}}
  \abs{t^2-s^2}\dint\la{2}(t,s)} \dint\la{2}(x,y) \\
&\frac{\gz{(1-x^2)(1-y^2)}^{-2\delta}xy(x^2-y^2)^2}
  {\int\limits_0^1\int\limits_0^1\gz{(1-t^2)(1-s^2)}^{-2\delta}
  st(s^2-t^2)^2 \dint\la{2}(t,s)}  \dint\la{2}(x,y)
\end{align*}
  converge in weak-$\ast$ topology to a measure concentrated on 
  $\kz{(x,y)\in\sz{0,1}|\ x=1 \,\vee\, y=1}$ as $\delta\to 1-0$. 
By the unitary symmetry, we can conclude that the measure
\begin{align*}
\frac{\det (I-XX^\ast)^{-\gz{\frac{3d}{4}-\frac{1}{2}}\delta}}
  {\int\limits_{\BK}
  \det (I-XX^\ast)^{-\gz{\frac{3d}{4}-\frac{1}{2}}\delta}
  \dint\la{4d}(X)}
  \dint\la{4d}(X)
\end{align*}
  converges in weak-$\ast$ topology to the uniform distribution
  on $\partial\BK$ as $\delta\to 1-0$.
The next lemma states that with this definition we get back $\tilde{\chi}_1$.
We conjecture that this identity also holds true for $\tilde{\chi}_2$
  and $\tilde{\eta}_2$.
\begin{lemma}\label{lem:eta}
The functions $\tilde{\chi}_1$ and $\tilde{\eta}_1$ are equals to each other.
\begin{equation*}
\tilde{\chi}_1(\varepsilon)=\tilde{\eta}_1(\varepsilon) 
  \quad\varepsilon\in \sz{0,1}
\end{equation*}
\end{lemma}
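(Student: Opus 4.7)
The plan is to compute $\tilde{\eta}_1(\varepsilon)$ directly by carrying out the regularization limit in suitable SVD-coordinates, and then show the resulting expression agrees with the formula for $\tilde{\chi}_1(\varepsilon)$ from Lemma~\ref{lem:chi}.

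I would first parametrize $X\in\BR$ via its SVD, writing $X=r\,O(\phi)\Lambda(1,s)O(\theta)$ with $r=\norm{X}\in[0,1)$ the operator norm and $s\in[0,1]$ the ratio of singular values. A direct Jacobian computation from Table~\ref{tab:detg} yields the Lebesgue element $r^{3}(1-s^{2})\,dr\,ds\,d\phi\,d\theta$ on $[0,1)^{2}\times[0,2\pi)^{2}$ and the factorization $\det(I-XX^{\ast})=(1-r^{2})(1-r^{2}s^{2})$. Setting $Y=X/r\in\partial\BR$ and $h(s,\phi,\theta,\varepsilon):=\norm{V_{\varepsilon}^{-1}YV_{\varepsilon}}^{-1}$, the operator norm condition $\norm{V_{\varepsilon}^{-1}XV_{\varepsilon}}\le1$ reads $r\le h$.

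The inner $r$-integral $\int_{0}^{\rho}(1-r^{2})^{-\alpha}(1-r^{2}s^{2})^{-\alpha}r^{3}\,dr$ is bounded as $\alpha\to1^{-}$ whenever $\rho<1$, but behaves like $(1-s^{2})^{-\alpha}/[2(1-\alpha)]$ exactly when $\rho=1$, since the only unbounded contribution comes from the endpoint $r=1$. Hence in the ratio defining $\tilde{\eta}_{1}(\varepsilon)$, only the set $\{h\ge1\}$ contributes at leading order, and the divergent factor $(1-s^{2})^{-\alpha}/[2(1-\alpha)]$ cancels between numerator and denominator against the Jacobian weight $(1-s^{2})$, leaving uniform weight in $(s,\phi,\theta)$:
\[
\tilde{\eta}_{1}(\varepsilon)=\frac{1}{4\pi^{2}}\int_{0}^{1}\int_{0}^{2\pi}\int_{0}^{2\pi}\1_{h(s,\phi,\theta,\varepsilon)\ge1}\,d\phi\,d\theta\,ds.
\]
A parallel direct SVD-evaluation of $\tilde{\chi}_{1}$, where the $r$-integral gives $\min(1,h)^{4}/4$, produces
\[
\tilde{\chi}_{1}(\varepsilon)=\frac{3}{8\pi^{2}}\int_{0}^{1}\int_{0}^{2\pi}\int_{0}^{2\pi}(1-s^{2})\min(1,h(s,\phi,\theta,\varepsilon))^{4}\,d\phi\,d\theta\,ds.
\]

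The main obstacle is verifying that these two expressions agree, which reduces to the identity
\[
\frac{3}{2}\int(1-s^{2})\min(1,h)^{4}\,ds\,d\phi\,d\theta=\int\1_{h\ge1}\,ds\,d\phi\,d\theta.
\]
For this, I would use the $2\times 2$ eigenvalue formula to rewrite $h\ge1$ (for $\varepsilon\in(0,1]$) as the algebraic condition $\abs{c(Y)}\le\varepsilon\abs{b(Y)}$ on the off-diagonal entries of $Y$, then pass to $u=\phi+\theta$, $v=\phi-\theta$. The condition becomes $L_{1}L_{2}\le0$ for two explicit linear forms $L_{1},L_{2}$ in $(\sin u,\sin v)$, depending only on the angular direction in the $(\sin u,\sin v)$-plane. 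A direct computation shows that the angular density of $(\sin u,\sin v)$ for $u,v$ uniform on $[0,2\pi)$ features the factor $\log\frac{1+\sin\beta}{1-\sin\beta}=2\operatorname{artanh}(\sin\beta)$, which is precisely the source of the $\log\frac{1+s}{1-s}$ appearing in the formula of Lemma~\ref{lem:chi}. Carrying out the angular integration should therefore reduce both sides to the same closed-form integral from Lemma~\ref{lem:chi}, completing the proof.
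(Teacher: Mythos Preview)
Your reduction is sound and genuinely different from the paper's argument. The regularization analysis is correct: in the SVD chart $X=r\,O(\phi)\Lambda(1,s)O(\theta)$ the divergent part of the $r$--integral as $\alpha\to 1^-$ indeed produces the flat weight $ds\,d\phi\,d\theta$ on $\{h\ge 1\}$, and your rewriting of $h\ge 1$ as $|c(Y)|\le\varepsilon\,|b(Y)|$ is a clean observation. The approach thus reduces Lemma~\ref{lem:eta} to the concrete identity
\[
\frac{1}{4\pi^{2}}\int_{0}^{1}\!\int_{0}^{2\pi}\!\int_{0}^{2\pi}\1_{|c|\le\varepsilon |b|}\,d\phi\,d\theta\,ds
\;=\;\tilde{\chi}_{1}(\varepsilon),
\]
which is a legitimate and interesting route.

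The gap is that this last identity is asserted rather than proved. Saying that the angular density of $(\sin u,\sin v)$ ``features the factor $\log\frac{1+\sin\beta}{1-\sin\beta}$'' and that ``carrying out the angular integration should therefore reduce both sides to the same closed-form integral from Lemma~\ref{lem:chi}'' is not a proof; it is exactly the hard step. You would in effect be recomputing $\tilde{\eta}_{1}$ from scratch and then matching it term by term against Lemma~\ref{lem:chi}, and nothing in what you have written shows that this matching actually occurs. (Note also that your SVD chart covers only the positive--determinant component of $\BR$; this is harmless by a $\sigma_3$--symmetry argument, but it should be said.)

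By contrast, the paper's proof avoids any second computation. It uses the chart $Y_{\pm}(t,\rho,\phi)$ of Appendix~\ref{App:AppendixA}, in which the similarity $X\mapsto V_{\varepsilon}^{-1}XV_{\varepsilon}$ acts as the translation $t\mapsto t-\delta$ (with $\varepsilon=e^{-\delta}$). After computing the induced volume form on $\partial\BR$ in this chart, the surface integral defining $1-\tilde{\eta}_{1}(e^{-\delta})$ turns out to be literally the \emph{same} integral $\tfrac{1}{2}\Delta(\delta)$ that appeared in Appendix~\ref{App:AppendixA} when computing $1-\tilde{\chi}_{1}(e^{-\delta})$. One obtains the affine relation $\tilde{\eta}_{1}=1-\frac{2\Vol(\BR)}{\Vol(\partial\BR)}(1-\tilde{\chi}_{1})$, and the boundary values at $\varepsilon\in\{0,1\}$ force the constant to equal $1$. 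The translation structure is what lets the paper bypass the explicit angular calculation that your approach still owes.
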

\begin{proof}
The proof of this theorem is provided in Appendix \ref{App:AppendixB}.
\end{proof}

\begin{definition}
The \emph{polylogarithmic function} is defined by the infinite sum
\begin{equation*}
\Li_s(z)=\sum\limits_{k=1}^\infty \frac{z^k}{k^s}.
\end{equation*}
  for arbitrary complex $s$ and for all complex arguments $z$ with $|z|<1$.
\end{definition}

\section{The main result}

We parametrize the space of $4\times 4$ density matrices ($\DN{4,\K}$) in the 
  following way 
\begin{equation*}
\rho(D_1,D_2,C)=\begin{pmatrix}D_1    & C \\ C^\ast & D_2 \end{pmatrix},
\end{equation*}
  where $D_1,D_2>0$, $D_1+D_2\in\DN{2,\K}$ and $C\in\K^{2\times 2}$.
Note that, with this parametrization 
\begin{equation}
\Tr_2 (\rho(D_1,D_2,C)) = D_1 + D_2. 
\end{equation}
For a given state $D\in\DN{2,\K}$ we define
\begin{equation}
\DN{4,\K}(D) = \kz{\rightabs{\rho\in\DN{4,\K}}\Tr_2 (\rho) = D},
\end{equation}
  that is the set of those states, which partial trace with respect to the
  system $2$, respectively yield the matrix $D\in\DN{2,\K}$.

Let us introduce the involution
\begin{equation}
\rho(D_1,D_2,C)\mapsto T(\rho(D_1,D_2,C))=\rho(D_1,D_2,C^\ast)
\end{equation} 
  that is just the composition of partial transpose and element-wise 
  conjugation which is a positive map.
Consequently, the aforementioned Peres--Horodecki positive partial transpose
  criterion can be reformulated as
\begin{equation}
\DN{4,\K}^s = T\gz{\DN{4,\K}}\cap\DN{4,\K}.
\end{equation}
Now we are in the position to state one of our main results.

\begin{theorem}\label{thm:1}
Let $D\in\DN{2,\K}$ be a fixed density matrix.
The Hilbert-Schmidt measure of  $\DN{4,\K}^s(D)$ is
\begin{align}\label{eq:cndVol}
\begin{split}
\Vol\gz{\DN{4,\K}^s(D)}&= 
\frac{\det (D)^{4d-\frac{d^2}{2}}}{2^{6d}} \\
&\times \int\limits_{\E{2,\K}}\det (I-Y^2)^d\times
  \chi_d\circ\sigma\gz{\sqrt{\frac{I-Y}{I+Y}}}\dint\la{d+2}(Y)
\end{split}
\end{align}
  and the volume of the space $\DN{4,\K}^s$ can be expressed as
\begin{equation}
\Vol\gz{\DN{4,\K}^s}=\int\limits_{\DN{2,\K}}
  \Vol\gz{\DN{4,\K}^s(D)} \dint\la{d+1}(D),
\end{equation}
  where $d=\dim_\R (\K)=1,2$.
\end{theorem}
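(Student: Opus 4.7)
The plan is to compute $\Vol\gz{\DN{4,\K}^s(D)}$ by two successive changes of variable in the block parametrization $\rho=\rho(D_1,D_2,C)$: an inner substitution on $C$ producing the $\chi_d$ factor, followed by an outer substitution on $D_1$ that brings in the parameter $Y\in\E{2,\K}$. Fixing $D$ and parametrizing $\DN{4,\K}(D)$ by $(D_1,C)$ with $D_2=D-D_1$, the Hilbert--Schmidt measure on this submanifold is, up to a universal constant depending only on $d$, the product Lebesgue measure on $\kz{D_1\in\sa{2}{\K}\mid 0<D_1<D}\times\K^{2\times 2}$. Applying Lemma~\ref{lem:schur} to both $\rho$ and $T(\rho)=\rho(D_1,D_2,C^{\ast})$, together with Lemma~\ref{lem:pn}, recasts the Peres--Horodecki condition $\rho\in\DN{4,\K}^s$ as the pair $\norm{D_1^{-1/2}CD_2^{-1/2}}<1$ and $\norm{D_1^{-1/2}C^{\ast}D_2^{-1/2}}<1$.

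For the inner integral I would substitute $X=D_1^{-1/2}CD_2^{-1/2}$. The left and right multiplication Jacobian rules quoted before Lemma~\ref{lem:adjmod} give the factor $\gz{\det D_1\det D_2}^{d}$; the first constraint becomes $\norm{X}<1$, while the second becomes $\norm{WX^{\ast}W^{-1}}<1$ with $W=D_1^{-1/2}D_2^{1/2}$. Using the singular value decomposition $W=U_1\Sigma U_2$ (Lemma~\ref{lem:svd}) together with the unitary invariance of Lebesgue measure and of the operator norm, and the further measure-preserving involution $X\mapsto X^{\ast}$, the two constraints reduce to $\norm{Z}<1$ and $\norm{V_\varepsilon^{-1}ZV_\varepsilon}<1$ for the transformed variable $Z$, where $\varepsilon$ is the singular value ratio of $W$. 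By the definition of $\chi_d$ and its reciprocal symmetry, the inner integral therefore equals $\chi_d\gz{\sigma(W)}$.

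For the outer integration I would change $D_1\mapsto E=D_1-D_2$ (so $D_1=(D+E)/2$ with $-D<E<D$) and then $E\mapsto Y=D^{-1/2}ED^{-1/2}\in\E{2,\K}$; the Jacobian of the second step is $\det(D)^{2d-d^2/2}$ by Lemma~\ref{lem:adjmod}. With $D_1=\frac{1}{2}D^{1/2}(I+Y)D^{1/2}$ and $D_2=\frac{1}{2}D^{1/2}(I-Y)D^{1/2}$, a direct computation gives $\det D_1\det D_2=\det(D)^2\det(I-Y^2)/16$, while $WW^{\ast}=D_1^{-1/2}D_2D_1^{-1/2}$ is similar to $(I+Y)^{-1}(I-Y)$, hence $\sigma(W)=\sigma\gz{\sqrt{(I-Y)/(I+Y)}}$. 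Collecting the Hilbert--Schmidt normalization, both Jacobians and the power of $\det D_1\det D_2$, the various powers of $2$ and of $\det(D)$ assemble into the claimed prefactor $\det(D)^{4d-d^2/2}/2^{6d}$ with the correct integrand. The second formula of the theorem then follows by Fubini along the fibration $\rho\mapsto\Tr_2(\rho)$.

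The main obstacle is the SVD reduction: one must verify that the composite of $X\mapsto X^{\ast}$ with the unitary conjugations generated by $U_1,U_2$ genuinely preserves Lebesgue measure and the first constraint $\norm{X}<1$ while recasting the similarity $W(\cdot)W^{-1}$ into the canonical diagonal form $V_\varepsilon^{-1}(\cdot)V_\varepsilon$, and that the resulting inner integral is insensitive to the SVD ordering (handled by the reciprocal symmetry of $\chi_d$). A secondary but tedious difficulty is tracking the powers of $2$ arising from the Hilbert--Schmidt off-diagonal normalization through both changes of variable to recover the exact factor $1/2^{6d}$.
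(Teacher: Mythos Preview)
Your proposal is correct and follows essentially the same route as the paper: the inner substitution $X=D_1^{-1/2}CD_2^{-1/2}$ with Jacobian $(\det D_1\det D_2)^d$, the SVD reduction of the second constraint to $\chi_d$ of a singular value ratio, the reparametrization $D_{1,2}=\tfrac12(D\pm A)$, and the outer substitution $Y=D^{-1/2}AD^{-1/2}$ with Jacobian $\det(D)^{2d-d^2/2}$ all match the paper's argument step for step. One small slip to fix when you write it out: after the inner substitution the second constraint is $\norm{WX^\ast(W^\ast)^{-1}}<1$, not $\norm{WX^\ast W^{-1}}<1$, since $D_1^{1/2}D_2^{-1/2}=(W^\ast)^{-1}\neq W^{-1}$ when $D_1,D_2$ do not commute; the paper sidesteps this by writing the condition as $\norm{(V^\ast)^{-1}XV}<1$ with $V=D_2^{1/2}D_1^{-1/2}=W^\ast$, which also removes the need for the extra involution $X\mapsto X^\ast$.
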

\begin{proof}
For fixed $D_1,D_2\in\sa{2}{\K}$, we set
\begin{equation*}
\mathcal{C}(D_1,D_2)=
\kz{\rightabs{C\in\K^{2\times 2}}\rho(D_1,D_2,C)>0,\ \rho(D_1,D_2,C^\ast)>0}.
\end{equation*}
By Fubini's theorem, we have
\begin{align}\label{eq:int}
\Vol\gz{\DN{4,\K}^s} 
&= \la{6d+3}\gz{T\gz{\DN{4,\K}}\cap\DN{4,\K}} \nonumber\\
&= \int\limits_{\begin{array}{c} D_1,D_2>0 \\ \Tr (D_1+D_2) = 1 \end{array}} 
  \int\limits_{C\in\mathcal{C}(D_1,D_2)} 1 \dint\la{4d}(C)
  \dint\la{2d+3}(D_1,D_2).
\end{align}

If $D_1,D_2>0$ and $\Tr (D_1+D_2) = 1$ fixed, then a matrix  
  $C\in\K^{2\times 2}$ belongs to the set $\mathcal{C}(D_1,D_2)$ if and only if
  $\begin{pmatrix}D_1    & C \\ C^\ast & D_2 \end{pmatrix}>0$ and 
  $\begin{pmatrix}D_1    & C^\ast \\ C & D_2 \end{pmatrix}>0$ holds. 
This condition can be reformulated by Lemma \ref{lem:schur}. as 
\begin{align*}
I &> \gz{D_1^{-1/2} C D_2^{-1/2}}^\ast D_1^{-1/2} C D_2^{-1/2} \Leftrightarrow 
  \norm{D_1^{-1/2} C D_2^{-1/2}}<1\\
I &> \gz{D_2^{-1/2} C D_1^{-1/2}}^\ast D_2^{-1/2} C D_1^{-1/2} \Leftrightarrow 
  \norm{D_2^{-1/2} C D_1^{-1/2}}<1,
\end{align*}
  where $\norm{\cdot}$ denotes the usual operator norm.

To compute the inner integral, we substitute 
\begin{equation*}
X=D_1^{-1/2} C D_2^{-1/2}=\gz{\Lm{D_1^{-1/2}} \circ \Rm{D_2^{-1/2}}} (C).
\end{equation*}
The Jacobian of this transform is 
  $\det\gz{\Lm{D_1^{-1/2}}\circ\Rm{D_2^{-1/2}}}^{-1}=\det (D_1)^d\det (D_2)^d$ 
  and the inner integral of \eqref{eq:int} can be written as
\begin{equation*}
\int\limits_{C\in\mathcal{C}(D_1,D_2)}1\dint\la{4d}(C) 
  =\det (D_1 D_2)^d\int\limits_{\BK}
  \1_{\norm{(V^\ast)^{-1} X V}<1}\dint\la{4d}(X),
\end{equation*}
  where $V=D_2^{1/2}D_1^{-1/2}$.
Observe that the last term depends only on the  singular value ratio of $V$, 
  because taking the singular value decomposition of $V$: $V=U_1\Sigma U_2$,
  we have
\begin{equation*}
\norm{(V^\ast)^{-1} X V}=\norm{U_1\Sigma^{-1}U_2 X U_1\Sigma U_2}
  =\norm{\Sigma^{-1}U_2 X U_1\Sigma}
\end{equation*} 
  and the transformation $X\mapsto U_2 X U_1$ is isometric with respect to the 
  Hilbert-Schmidt norm.
It means that
\begin{equation*}
\chi_d (\sigma (V))=\int\limits_{\BK}
  \1_{\norm{(V^\ast)^{-1} X V}<1} \dint\la{4d}(X)
\end{equation*}
holds. 
By Lemma \ref{lem:svratio}, the singular value ratio of $V$ is
\begin{equation*}
\sigma(V)=e^{-\cosh^{-1}\gz{\frac{\hsnorm{V}^2}{2|\det (V)|}}}
  = e^{-\cosh^{-1}\gz{\frac{1}{2}\sqrt{\frac{\det (D_1)}{\det (D_2)}}
  \Tr \gz{D_2 D_1^{-1}}}}
\end{equation*}
  hence for the volume of separable states \eqref{eq:int} we obtain
\begin{equation*}
\Vol\gz{\DN{4,\K}^s}
  =\int\limits_{\begin{array}{c}D_1,D_2>0 \\ \Tr (D_1+D_2) = 1\end{array}}  
  \det (D_1 D_2)^d f(D_2 D_1^{-1})\dint\la{2d+3}(D_1,D_2), 
\end{equation*}
  where 
\begin{equation*}
f(D_2 D_1^{-1})=\chi_d\circ\exp 
  \gz{{-\cosh^{-1}\gz{\frac{1}{2}\sqrt{\frac{\det (D_1)}{\det (D_2)}}
  \Tr \gz{D_2 D_1^{-1}}}}}.
\end{equation*}
We introduce the parametrization
\begin{align}\label{eq:param}
\begin{split}
D_1 = \frac{1}{2}(D + A) \\
D_2 = \frac{1}{2}(D - A),
\end{split}
\end{align}
  where $D$ takes values in $\DN{2,\K}$ and $A$ runs on self-adjoint 
  $2\times 2$ matrices that satisfy the condition $-D<A<D$.

By the invariance of trace under cyclic permutations, the previous integral can 
  be written as
\begin{equation*}
\Vol\gz{\DN{4,\K}^s}=\int\limits_{\DN{2,\K}}\Vol\gz{\DN{4,\K}^s(D)}
  \dint\la{d+1}(D),
\end{equation*}
  where
\begin{align*}
&\Vol\gz{\DN{4,\K}^s(D)} =\frac{\det (D)^{2d}}{2^{6d}}\times \\
&\times
\int\limits_{\begin{array}{c}A\in\sa{2}{\K} \\-D<A<D\end{array}}
  \det (I-(D^{-1/2}AD^{-1/2})^2)^d 
  f\gz{\frac{I-D^{-1/2}AD^{-1/2}}{I+D^{-1/2}AD^{-1/2}}} \dint\la{d+2}(A).
\end{align*}
We substitute $Y=D^{-1/2}AD^{-1/2}=\gz{\Lm{D^{-1/2}}\circ\Rm{D^{-1/2}}}(A)$.
According to the remark after Lemma \ref{lem:adjmod}, the Jacobian of this 
  transformation is 
\begin{equation*}
\det\gz{\Lm{D^{-1/2}}\circ\Rm{D^{-1/2}}}^{-1} = \det (D)^{2d-d^2/2}.
\end{equation*}
Observe that 
  $f\gz{\frac{I-Y}{I+Y}}= \chi_d\circ\sigma\gz{\sqrt{\frac{I-Y}{I+Y}}}$ and thus
\begin{equation*}
\Vol\gz{\DN{4,\K}^s(D)} = 
  \frac{\det (D)^{4d-\frac{d^2}{2}}}{2^{6d}} \int\limits_{\E{2,\K}}
  \det (I-Y^2)^d\times (\chi_d\circ\sigma)\gz{\sqrt{\frac{I-Y}{I+Y}}}
  \dint\la{d+2}(Y)
\end{equation*}
  which completes the proof.
\end{proof}

The next Corollary proves Milz and Strunz's conjecture on the behavior of the 
  conditioned volume over reduced states (See equation (23) in \cite{Milz2015}).
\begin{corollary}
In the complex case, the conditioned volume can be expressed as 
\begin{equation*}
\Vol\gz{\DN{4,\K}^s(D)}= K_1\times\det (D)^6 = K_2\times (1-r^2)^6,
\end{equation*}
  where $K_1,K_2$ are constants and $r$ is the radius of $D$ in the 
  Bloch sphere.
\end{corollary}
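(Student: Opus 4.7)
The plan is to read the corollary directly off of formula \eqref{eq:cndVol} of Theorem \ref{thm:1}, specialized to $d=2$. With $d=2$, the exponent in the prefactor becomes $4d-\frac{d^2}{2}=8-2=6$, so Theorem \ref{thm:1} immediately gives
\begin{equation*}
\Vol\gz{\DN{4,\C}^s(D)} = \frac{\det (D)^{6}}{2^{12}} \int\limits_{\E{2,\C}}\det (I-Y^2)^2\,(\chi_2\circ\sigma)\gz{\sqrt{\tfrac{I-Y}{I+Y}}}\dint\la{4}(Y).
\end{equation*}

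The crucial observation is that the domain of integration $\E{2,\C}$ and the integrand depend solely on $Y$; no $D$-dependence survives inside the integral. Thus the integral is a universal finite constant $K$, which I would briefly justify as follows: $\chi_2$ is bounded above by $\chi_2(1)=\pi^4/6$ (see Table \ref{tab:normc}), $\det (I-Y^2)^2\le 1$ on $\E{2,\C}$, and $\E{2,\C}$ is a bounded subset of $\sa{2}{\C}$, so the integrand is bounded and compactly supported. Setting $K_1=K/2^{12}$ yields the first equality $\Vol(\DN{4,\C}^s(D))=K_1\det (D)^6$.

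For the second equality, I would invoke the canonical Bloch-sphere parametrization \eqref{eq:DC}: the eigenvalues of $D(\theta,\phi,r)$ can be read off directly as $(1\pm r)/2$, whence $\det D = (1-r^2)/4$. Substituting gives $\det(D)^6 = (1-r^2)^6/4^6$, so $K_2 := K_1/4^6$ fulfills the claim.

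There is essentially no obstacle here: the nontrivial work has already been done in Theorem \ref{thm:1}, whose proof introduced the parametrization \eqref{eq:param} precisely to extract the $D$-dependent prefactor $\det(D)^{4d-d^2/2}/2^{6d}$ while reducing the remaining integral to one over the universal operator interval $\E{2,\K}$. The corollary just amounts to noticing that at $d=2$ this prefactor is $\det(D)^6$ up to a constant, and rewriting it via $\det D=(1-r^2)/4$.
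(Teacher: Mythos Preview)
Your proof is correct and follows essentially the same route as the paper: specialize \eqref{eq:cndVol} to $d=2$ so the exponent $4d-d^2/2$ becomes $6$, observe the remaining integral is a $D$-independent constant, and then rewrite $\det(D)=\tfrac{1}{4}(1-r^2)$ via the Bloch parametrization \eqref{eq:DC}. The only addition is your brief justification that the integral constant is finite, which is a welcome clarification but not a departure from the paper's argument.
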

\begin{proof}
We set $d=2$ in \eqref{eq:cndVol} and obtain the first equality.
According to the parametrization of $\DN{2,\C}$ \eqref{eq:DC}, 
  $\det (D)=\frac{1}{4}(1-r^2)$ which proves the second equality.
\end{proof}

\begin{corollary}\label{cor:HS}
If $D\in\DN{2,\K}$ is a fixed density matrix, then the probability to find a 
  separable state in $\DN{4,\K}(D)$ can be written as
\begin{equation}\label{eq:psep}
\mathcal{P}_{sep}(\K)=\int\limits_{\E{2,\K}} 
  \tilde{\chi}_d\circ\sigma\gz{\sqrt{\frac{I-Y}{I+Y}}}\dint\mu_{d+2}(Y), 
\end{equation}
  where
\begin{equation*}
\dint\mu_{d+2}(Y)= 
  \frac{\det (I-Y^2)^d}{\int\limits_{\E{2,\K}}\det (I-Z^2)^d\dint\la{d+2}(Z)}
  \dint\la{d+2}(Y).
\end{equation*}
It is apparent that this probability is not depend on $D$ that proves the 
  conjecture of Milz and Strunz \cite{Milz2015}. 
\end{corollary}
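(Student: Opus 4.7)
The plan is to write $\mathcal{P}_{sep}(\K)$ as the ratio
$\Vol\gz{\DN{4,\K}^s(D)}/\Vol\gz{\DN{4,\K}(D)}$, apply Theorem \ref{thm:1}
for the numerator, compute the denominator by the same argument, and observe
that every $D$-dependent factor cancels in the quotient. In this way the
independence of $D$ is automatic, and the resulting expression should match
\eqref{eq:psep}.

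For the denominator I would retrace the proof of Theorem \ref{thm:1}
verbatim, dropping only the partial transpose requirement
$\rho(D_1,D_2,C^\ast)>0$. Schur's lemma (Lemma \ref{lem:schur}) together
with Lemma \ref{lem:pn} then leaves the single condition
$\norm{D_1^{-1/2} C D_2^{-1/2}}<1$. The substitution
$X=D_1^{-1/2} C D_2^{-1/2}$ has the same Jacobian $\det(D_1 D_2)^d$ as
before, and the inner integral over $C$ collapses to
$\det(D_1 D_2)^d\,\chi_d(1)$, because the admissible $X$ now fills the
entire unit ball $\BK$. Continuing with the parametrization
\eqref{eq:param} and the substitution $Y=D^{-1/2} A D^{-1/2}$ exactly as in
Theorem \ref{thm:1} produces
\begin{equation*}
\Vol\gz{\DN{4,\K}(D)}
=\chi_d(1)\,\frac{\det (D)^{4d-\frac{d^2}{2}}}{2^{6d}}
 \int\limits_{\E{2,\K}}\det (I-Y^2)^d\dint\la{d+2}(Y).
\end{equation*}

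Dividing the formula \eqref{eq:cndVol} of Theorem \ref{thm:1} by this
expression, the prefactor $\det (D)^{4d-d^2/2}/2^{6d}$ cancels identically,
dividing $\chi_d$ by $\chi_d(1)$ turns the inner integrand into
$\tilde{\chi}_d\circ\sigma\gz{\sqrt{\frac{I-Y}{I+Y}}}$, and the leftover
$\det (I-Y^2)^d\dint\la{d+2}(Y)$ together with its normalizing integral
over $\E{2,\K}$ is exactly $\dint\mu_{d+2}(Y)$, giving \eqref{eq:psep}.
Since no $D$-dependence survives the cancellation, the Milz--Strunz
conjecture is immediate.

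I do not anticipate any serious obstacle: the whole argument is a
bookkeeping consequence of Theorem \ref{thm:1}. The only point worth
checking carefully is that the $D$-prefactor is identical in the two volume
computations, which is automatic because the Peres--Horodecki condition
enters the derivation of Theorem \ref{thm:1} only through the inner
integrand $\chi_d\circ\sigma\gz{\sqrt{\frac{I-Y}{I+Y}}}$, and dropping it
merely replaces this factor by $\chi_d(1)$.
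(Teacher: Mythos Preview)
Your proposal is correct and matches the paper's own proof essentially verbatim: the paper likewise computes $\Vol\gz{\DN{4,\K}(D)}=\chi_d(1)\,\det (D)^{4d-d^2/2}2^{-6d}\int_{\E{2,\K}}\det (I-Y^2)^d\dint\la{d+2}(Y)$ by rerunning the Theorem \ref{thm:1} argument without the partial-transpose constraint, and then takes the ratio with \eqref{eq:cndVol} to cancel the $D$-dependent prefactor. There is nothing to add.
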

\begin{proof}
In a similar way, we can calculate the volume of the whole space
\begin{equation*}
\Vol\gz{\DN{4,\K}(D)}=\chi_d(1)\frac{\det (D)^{4d-\frac{d^2}{2}}}{2^{6d}} 
  \int\limits_{\E{2,\K}}\det (I-Y^2)^d\dint\la{d+2}(Y) 
\end{equation*}
  and thus we have
\begin{align*}
\frac{\Vol\gz{\DN{4,\K}^s(D)}}{\Vol\gz{\DN{4,\K}(D)}}
=\frac{\int\limits_{\E{2,\K}}\det (I-Y^2)^d\,
  \chi_d\circ\sigma\gz{\sqrt{\frac{I-Y}{I+Y}}} \dint\la{d+2}(Y)}
  {\chi_d (1)\int\limits_{\E{2,\K}}\det (I-Y^2)^d\dint\la{d+2}(Y)}
\end{align*}
  which completes the proof.
\end{proof}

Using the fact that $\mu_{d+2}$ and $\sigma\gz{\sqrt{\frac{I-Y}{I+Y}}}$ are 
  invariant under orthogonal (unitary) transformation, we can simplify
  \eqref{eq:psep} and we obtain the following theorem.

\begin{theorem}
The separability probability in the rebit-rebit system with respect to the
  Hilbert--Schmidt measure is
\begin{equation}
\mathcal{P}_{\text{sep}}(\R)=\frac{29}{64}.
\end{equation}
\end{theorem}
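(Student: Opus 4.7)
The plan is to start from Corollary \ref{cor:HS} with $d=1$ and, by exploiting the orthogonal invariance noted in the paragraph preceding the theorem, reduce the three-dimensional integral over $\E{2,\R}$ to a single-variable integral in the argument of $\tilde{\chi}_1$, which can then be evaluated using the explicit formula of Lemma \ref{lem:chi}. Parametrising $Y\in\E{2,\R}$ via $Y=R(\theta,x,y)$ with eigenvalues $x,y\in(-1,1)$ from Table \ref{tab:detg}, using the volume form $|x-y|/\sqrt{2}$ and integrating out $\theta$, equation \eqref{eq:psep} collapses (after using the symmetry $x\leftrightarrow y$) to
\begin{equation*}
\mathcal{P}_{\text{sep}}(\R)=\frac{\displaystyle\int\!\!\int_{-1<y<x<1}(1-x^2)(1-y^2)(x-y)\,\tilde{\chi}_1(\sigma(x,y))\,dx\,dy}{\displaystyle\int\!\!\int_{-1<y<x<1}(1-x^2)(1-y^2)(x-y)\,dx\,dy},
\end{equation*}
where on $x>y$ one has $\sigma(x,y)=\sqrt{(1-x)(1+y)/((1+x)(1-y))}$.

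Next I would apply the Cayley-type substitution $u=\sqrt{(1-x)/(1+x)}$, $v=\sqrt{(1+y)/(1-y)}$, which maps $\{x>y\}$ diffeomorphically onto $\{(u,v)\in(0,\infty)^2:uv<1\}$ and, decisively, turns the argument of $\tilde{\chi}_1$ into the single product $uv$. A routine Jacobian computation reduces the integrand (up to constants) to $u^3v^3(1-u^2v^2)/[(1+u^2)^5(1+v^2)^5]$. Passing to $\varepsilon=uv$ and $t=v$ separates the $\tilde{\chi}_1$-variable from a radial $t$-variable; the $t$-integral is elementary, since via $\tau=t^2$ and $w=\tau/(\tau+1)$ it becomes an Euler integral $\int_0^1 w^4(1-w)^4(1-(1-\varepsilon^2)w)^{-5}\,dw$, i.e.\ a rational function of $\varepsilon^2$. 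Combined with the analogous reduction of the denominator, this yields a one-dimensional representation
\begin{equation*}
\mathcal{P}_{\text{sep}}(\R)=\int_0^1 \tilde{\chi}_1(\varepsilon)\,p(\varepsilon)\,d\varepsilon
\end{equation*}
for an explicit rational probability density $p$ on $(0,1)$.

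The final step is to substitute the integral formula for $\tilde{\chi}_1$ from Lemma \ref{lem:chi}, switch the order of integration by Fubini, and evaluate the resulting one-dimensional integral
\begin{equation*}
\mathcal{P}_{\text{sep}}(\R)=\frac{4}{\pi^2}\int_0^1\Bigl(s+\tfrac{1}{s}-\tfrac{1}{2}\bigl(s-\tfrac{1}{s}\bigr)^2\log\tfrac{1+s}{1-s}\Bigr)\frac{Q(s)}{s}\,ds,
\end{equation*}
where $Q(s)=\int_s^1 p(\varepsilon)\,d\varepsilon$ is explicit and rational. The integrand is a combination of rational functions and $\log((1+s)/(1-s))$, and can be integrated by partial fractions combined with standard dilogarithm identities.

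The main obstacle is precisely this last evaluation: the dilogarithmic pieces generated by $\log((1+s)/(1-s))$ must cancel exactly against the rational contributions, with the boundary values at $s\to 0$ and $s\to 1$ producing $\pi^2/6=\zeta(2)$ contributions that precisely balance the $4/\pi^2$ prefactor inherited from Lemma \ref{lem:chi}. Each ingredient is elementary, but keeping track of signs and collecting the surviving terms cleanly is the delicate bookkeeping step, and it explains why the clean rational value $29/64$ emerges at the end.
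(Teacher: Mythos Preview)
Your strategy is essentially the paper's: reduce \eqref{eq:psep} by orthogonal invariance to the double eigenvalue integral, change variables so that the argument of $\tilde{\chi}_1$ becomes a single coordinate, integrate out the remaining radial variable, and then feed in Lemma~\ref{lem:chi} for the final evaluation. The paper uses the Cayley substitution $u=\frac{1-x}{1+x}$, $v=\frac{1-y}{1+y}$ followed by $u=ts$, $v=s/t$, and replaces $\tilde{\chi}_1$ by $(\tilde{\chi}_1)'$ via integration by parts before doing the $s$-integral; your square-root Cayley substitution with $\varepsilon=uv$ and the Fubini route are equivalent reshufflings of the same computation.

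There is, however, one concrete error. After you pass to $\tau=t^2$ and $w=\tau/(\tau+1)$ you correctly arrive at
\[
\int_0^1 w^4(1-w)^4\bigl(1-(1-\varepsilon^2)w\bigr)^{-5}\,dw
\;=\;B(5,5)\,{}_2F_1\!\bigl(5,5;10;1-\varepsilon^2\bigr),
\]
but this is \emph{not} a rational function of $\varepsilon^2$. A partial-fraction check on $\int_0^\infty \tau^4(\tau+\varepsilon^2)^{-5}(\tau+1)^{-5}\,d\tau$ shows that the simple-pole coefficients $A_1,B_1$ satisfy $A_1+B_1=0$ but $A_1\neq 0$, so a term $-2A_1\log\varepsilon$ survives; equivalently, ${}_2F_1(n,n;2n;\cdot)$ is logarithmic for positive integers $n$. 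This matches the paper, where the corresponding $s$-integral produces the factor $6(1+t^2)(1+8t^2+t^4)\log t$ in the kernel against $(\tilde{\chi}_1)'(t)$. So your density $p(\varepsilon)$ is rational plus a $\log\varepsilon$ piece, not purely rational. The rest of your outline---inserting Lemma~\ref{lem:chi}, swapping integrals, and clearing dilogarithms---then goes through exactly as in the paper, yielding the primitive displayed there (with $\Li_2$ and $\tanh^{-1}$) and the value $\tfrac{29}{64}$.
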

\begin{proof}
According to Equation \eqref{eq:psep} and the above mentioned unitary 
  invariance, we can write
\begin{equation*}
\mathcal{P}_{sep}(\R) =
\frac{\int\limits_{-1}^1\int\limits_{-1}^x \tilde{\chi}_1 
  \gz{\sqrt{\frac{1-x}{1+x}}\times \sqrt{\frac{1+y}{1-y}}}
      (1-x^2)(1-y^2)(x-y)\dint y\dint x}
  {\int\limits_{-1}^1\int\limits_{-1}^x (1-x^2)(1-y^2)(x-y)\dint y\dint x},
\end{equation*}
  where the denominator is equal to $\frac{16}{35}$. In the numerator, we   
  substitute $u=\frac{1-x}{1+x}$, $v=\frac{1-y}{1+y}$. 
The Jacobian of this transformation is $\frac{4}{(1+u)^2(1+v)^2}$. 
Not that, the map $z\mapsto \frac{1-z}{1+z}$ is a monotone decreasing 
  involution that maps $(-1,1)$ onto $(0,\infty)$.
After substitution, the numerator gains the following form
\begin{equation*}
\int\limits_0^\infty\int\limits_0^v
  \tilde{\chi}_1 \gz{\sqrt{\frac{u}{v}}}\frac{128uv(v-u)}{(1+u)^5(1+v)^5}
  \dint u\dint v.
\end{equation*}
We substitute again. Let $u=ts$ and $v=\frac{s}{t}$. 
The Jacobian of this substitution is $\frac{2s}{t}$ and the domain of 
  integration is $0<s<\infty$, $0<t<1$. 
So, we have
\begin{align*}
\int\limits_0^\infty\int\limits_0^v
  \tilde{\chi}_1 \gz{\sqrt{\frac{u}{v}}} \frac{128uv(v-u)}{(1+u)^5(1+v)^5}
  \dint u\dint v 
=\int\limits_0^\infty\int\limits_0^1
\tilde{\chi}_1 (t)\frac{256s^4t^3(1-t^2)}{(s+t)^5 (1+st)^5}\dint t\dint s.
\end{align*}
Now we integrate by parts in the inner integral and we get
\begin{equation*}
\int\limits_0^1 \tilde{\chi}_1 (t)
  \frac{256s^4t^3(1-t^2)}{(s+t)^5 (1+st)^5}\dint t 
=\frac{64 s^3}{(s+1)^8}-\int\limits_0^1
  \frac{64s^3t^4 \gz{\tilde{\chi}_1}' (t)}{(s+t)^4(1+st)^4}\dint t.
\end{equation*}
With this, the numerator can be written as
\begin{align*}
&\int\limits_0^\infty\frac{64 s^3}{(s+1)^8}\dint s
  -\int\limits_0^1\int\limits_0^\infty
  \frac{64s^3t^4 \gz{\tilde{\chi}_1}' (t)}{(s+t)^4(1+st)^4}
  \dint s\dint t\\
&=\frac{16}{35}- \frac{64}{3}\int\limits_0^1
  \frac{11(1-t^6)+27t^2(1-t^2)+6(1+t^2)(1+8t^2+t^4)\log (t)}{(t^2-1)^7}
  \gz{\tilde{\chi}_1}' (t)\dint t,
\end{align*}
  where we have interchanged the order of integration in the last term.

One can check that
\begin{align*}
\frac{64}{3}
  \int&\frac{11(1-t^6)+27t^2(1-t^2)+6(1+t^2)(1+8t^2+t^4)\log (t)}{(t^2-1)^7} 
  \gz{\tilde{\chi}_1}' (t)\dint t = \\
=&-\frac{1}{9\pi ^2\gz{t^2-1}^6}
  \leftsz{9\gz{t^2-1}^6 \Li_2(1-t)+9\gz{t^2-1}^6 \Li_2(-t)}+ \\
&+96\gz{t^2+1}\gz{t^4+28 t^2+1}\gz{t^2-1}^3\tanh ^{-1}(t)+ \\
&+9\gz{t^8-132 t^6-378 t^4-132 t^2+1}\gz{t^2-1}^2 \log (t)\log (t+1)+ \\
&+2t\gz{-57 t^{10}-1211 t^8+78 t^6-78 t^4+1211 t^2}+ \\
&+6t\leftgz{\leftgz{-3 t^{10}+401 t^8+882 t^6+882 t^4+401 t^2+192 +}} \\
&+\rightsz{\rightgz{\rightgz{\gz{t^9+t^7-4t^5+t^3+t} \log (1-t)-3}\log (t)+57}}
 +\mbox{const},
\end{align*}
  where we applied Lemma \ref{lem:chi}. 
Using this, one can conclude that
\begin{equation*}
\frac{64}{3}\int\limits_0^1
  \frac{11(1-t^6)+27t^2(1-t^2)+6(1+t^2)(1+8t^2+t^4)\log (t)}{(t^2-1)^7}
  \gz{\tilde{\chi}_1}' (t)\dint t = \frac{1}{4}
\end{equation*}
  and thus we have
\begin{equation*}
\mathcal{P}_{sep}(\R) = \frac{\frac{16}{35}-\frac{1}{4}}{\frac{16}{35}}
  =\frac{29}{64}
\end{equation*}
which completes the proof.
\end{proof}

\section{Generalization to $\gz{\DN{4,\K},g_{\sqrt{x}}}$}

The operator monotone function $f:\R^+\to R$ is said to be symmetric and 
  normalized if $f(x)=xf(x^{-1})$ holds for every positive argument $x$ and 
  $f(1)=1$.
The set of symmetric and normalized operator monotone functions plays an 
  important role in quantum information geometry \cite{PetzQinf,Dittmann1}. 
Petz's classification theorem states that there exists a bijective
  correspondence between the set of symmetric and normalized operator monotone 
  functions and the family of monotone metrics \cite{PetzSud96}.
The metric associated to the operator monotone function $f$ is given by
\begin{equation}
g_f(D)(X,Y)=\Tr\gz{X
  \gz{\Rm{D}^{\frac{1}{2}}f\gz{\Lm{D}\Rm{D}^{-1}}\Rm{D}^{\frac{1}{2}}}^{-1}(Y)}
\end{equation}
  for all $n\in\N^+$, $D\in\DN{n,\K}$ and $X,Y\in T_D\DN{n,\K}$. 
The space of $n\times n$ density matrices endowed with the accompanying 
  monotone metric of the operator monotone function $f$ is denoted by 
  $\gz{\DN{n,\K},g_f}$.

In this point, we generalize our results to the space
  $\gz{\DN{4,\K},g_{\sqrt{x}}}$.
According to theorem 6 in \cite{AndaiVol}, the volume form of 
  $\gz{\DN{n,\K},g_f}$ can be expressed as
\begin{equation}
\sqrt{\det (g_f(D))}=\frac{1}{\sqrt{\det (D)}}
  \gz{2^{\binom{n}{2}}\prod\limits_{1\le i<j\le n}c_f (\mu_i,\mu_j)}^{d/2},
\end{equation}
  where $d=\dim_\R \K$, $\mu_i$-s are the eigenvalues of $D$ and   
  $c_f(x,y)=\frac{1}{y f(x/y)}$ is the \v{C}enzov--Morozova function 
  associated to $f$. 
For $n=4$ and $f(x)=\sqrt{x}$, we have
\begin{equation}
\sqrt{\det (g_f(D))}
  =\frac{2^{3d}}{\det (D)^{\frac{3d}{4}+\frac{1}{2}}}\quad d=1,2.
\end{equation}

A slight modification of the previous proofs gives the following Theorem.

\begin{theorem}
Let $D\in\DN{2,\K}$ be a fixed density matrix.
The volume of the submanifold $\gz{\DN{4,\K}^s(D),g_{\sqrt{x}}}$ 
  can be formally written as
\begin{align*}
\Vol_{\sqrt{x}}\gz{\DN{4,\K}^s(D)} 
  &=4\det (D)^{\frac{5}{2}d-\frac{d^2}{2}-1} \\
&\times \int\limits_{\E{2,\K}}\det (I-Y^2)^\frac{d-2}{4} \eta_d\circ\sigma
  \gz{\sqrt{\frac{I-Y}{I+Y}}} \dint\la{d+2}(Y)
\end{align*}
  and the volume of the space $\gz{\DN{4,\K}^s,g_{\sqrt{x}}}$ can be formally 
  expressed as
\begin{equation*}
\Vol_{\sqrt{x}}\gz{\DN{4,\K}^s}=\int\limits_{\DN{2,\K}}
  \Vol_{\sqrt{x}}\gz{\DN{4,\K}^s(D)}\dint\la{d+1}(D),
\end{equation*}
  where $d=\dim_\R (\K)=1,2$.
\end{theorem}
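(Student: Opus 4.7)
The plan is to mimic the proof of Theorem~\ref{thm:1} step by step, with two essential modifications: replace the constant integrand $1$ by the monotone volume density $\sqrt{\det g_{\sqrt{x}}(\rho)} = 2^{3d}\det(\rho)^{-3d/4-1/2}$, and account for the resulting extra factor of $\det(I-XX^\ast)^{-3d/4-1/2}$ that appears in the inner $C$-integral, which is precisely the integrand defining $\eta_d$.

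First, I would block-parametrize $\rho(D_1,D_2,C)$ exactly as before and note that, by Schur's formula together with the identity $D_1 - CD_2^{-1}C^\ast = D_1^{1/2}(I - XX^\ast)D_1^{1/2}$ after the substitution $X = D_1^{-1/2}CD_2^{-1/2}$, one has
\begin{equation*}
\det(\rho) = \det(D_1)\det(D_2)\det(I - XX^\ast).
\end{equation*}
The Peres--Horodecki / Schur criterion characterizing $\DN{4,\K}^s$ remains the same pair of operator-norm conditions as in Theorem~\ref{thm:1}. Applying the substitution $X = D_1^{-1/2}CD_2^{-1/2}$ (Jacobian $\det(D_1)^d\det(D_2)^d$) to the inner $C$-integral of
\begin{equation*}
\int 2^{3d}\det(\rho)^{-3d/4-1/2}\,\1_{\rho,T\rho > 0}\,\dint\la{4d}(C)\dint\la{2d+3}(D_1,D_2)
\end{equation*}
produces $2^{3d}[\det(D_1)\det(D_2)]^{d/4-1/2}$ outside, times $\int_{\BK}\det(I-XX^\ast)^{-3d/4-1/2}\1_{\|V_\varepsilon^{-1}XV_\varepsilon\|<1}\dint\la{4d}(X)$ inside, where $V = D_2^{1/2}D_1^{-1/2}$ and $\varepsilon = \sigma(V)$. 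Since both the indicator and the weight $\det(I-XX^\ast)^{-3d/4-1/2}$ are invariant under $X\mapsto U_2XU_1$ for unitaries $U_1,U_2$ (the weight because $\det(I-UXX^\ast U^\ast)=\det(I-XX^\ast)$), the singular-value reduction of Theorem~\ref{thm:1} works verbatim and identifies the inner integral as $\eta_d(\sigma(V))$.

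Next I would introduce the parametrization $D_1=(D+A)/2$, $D_2=(D-A)/2$ with $D\in\DN{2,\K}$ and $-D<A<D$, and substitute $Y = D^{-1/2}AD^{-1/2}$ whose Jacobian, by Lemma~\ref{lem:adjmod} and the remark following it, is $\det(D)^{2d-d^2/2}$. The factorization $D\pm A = D^{1/2}(I\pm Y)D^{1/2}$ yields $\det(D_1)\det(D_2) = \tfrac{1}{16}\det(D)^2\det(I-Y^2)$, and the same algebraic identity as in Theorem~\ref{thm:1} rewrites $\sigma(V)$ as $\sigma\bigl(\sqrt{(I-Y)/(I+Y)}\bigr)$. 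Collecting the exponents on $\det(D)$ gives $(d/2-1)+(2d-d^2/2) = 5d/2 - d^2/2 - 1$, while the exponent on $\det(I-Y^2)$ is $d/4 - 1/2 = (d-2)/4$, exactly matching the theorem statement. The overall numerical constant $4$ arises from combining the factor $2^{3d}$ from the volume form, the factor $2^{-(d+2)}$ from the linear Jacobian of $(D_1,D_2)\mapsto(D,A)$ restricted to the trace-fixing hyperplane, and the factor $16^{1/2-d/4} = 2^{2-d}$ produced by the $1/16$ in $\det(D_1)\det(D_2)$.

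The main obstacle is not algebraic but analytic: the density $\det(\rho)^{-3d/4-1/2}$ is non-integrable near the boundary $\det(\rho)=0$, so $\eta_d(1)=\infty$ as recorded in Table~\ref{tab:normc}, and correspondingly the integrals above are \emph{formally} divergent. This is why the theorem states the identity in a formal sense. To make it rigorous one proceeds as for $\tilde{\eta}_d$, inserting an exponent $\delta<1$ and taking the weak-$\ast$ limit $\delta\to 1-0$: at every finite $\delta$ the manipulations above are valid, and the limit on both sides produces the stated identity between formal expressions. The outer integration over $D\in\DN{2,\K}$ to recover $\Vol_{\sqrt{x}}(\DN{4,\K}^s)$ is then immediate by Fubini, exactly as in Theorem~\ref{thm:1}.
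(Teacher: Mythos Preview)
Your approach is essentially identical to the paper's: both insert the volume density $2^{3d}\det(\rho)^{-3d/4-1/2}$ into the setup of Theorem~\ref{thm:1}, factor $\det(\rho)=\det(D_1)\det(D_2)\det(I-XX^\ast)$ via the Schur complement, recognise the resulting inner $C$-integral as $\eta_d\circ\sigma(V)$ after the substitution $X=D_1^{-1/2}CD_2^{-1/2}$, and then pass to $(D,A)$ and $Y=D^{-1/2}AD^{-1/2}$ exactly as before. Your remark that the weight $\det(I-XX^\ast)^{-3d/4-1/2}$ is bi-unitarily invariant (so the singular-value reduction still goes through) is a point the paper only alludes to with ``by a similar argument'', and your discussion of the $\delta$-regularisation is a welcome clarification of the word ``formally'' that the paper does not spell out.

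One small arithmetic slip: your accounting of the constant $4$ is off. With the paper's measure conventions the Jacobian of $(D_1,D_2)\mapsto(D,A)$ on the trace hyperplane contributes $2^{-2d}$, not $2^{-(d+2)}$; combining $2^{3d}\cdot 2^{-2d}\cdot 2^{2-d}=4$ then gives the stated constant for both $d=1$ and $d=2$, whereas your factors yield $2^{d}$, which matches only for $d=2$. This does not affect the structure of the argument.
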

\begin{proof}
By the factorization 
  $\det(D)=\det (D_1D_2)\det(I-D_1^{-1/2}CD_2^{-1}C^\ast D_1^{-1/2})$, 
  we can write
\begin{align}
&\Vol_{\sqrt{x}}\gz{\DN{4,\K}^s}
=2^{3d}\int\limits_{\begin{array}{c}
  D_1,D_2>0 \\
  \Tr (D_1+D_2) = 1
  \end{array}} \det (D_1D_2)^{-\frac{3d}{4}-\frac{1}{2}}\\
&\times\int\limits_{C\in\mathcal{C}(D_1,D_2)}
  \det (I-D_1^{-1/2}CD_2^{-1}C^\ast D_1^{-1/2})^{-\frac{3d}{4}-\frac{1}{2}} 
  \dint\la{4d}(C) \dint\la{2d+3}(D_1,D_2),
\end{align}
  where $\mathcal{C}(D_1,D_2)$ is the same as in Theorem \ref{thm:1}.
Using the substitution $X=D_1^{-1/2}CD_2^{-1/2}$, the inner integral
  can be written in the following form
\begin{equation*}
\det (D_1 D_2)^d\int\limits_{\BK}
  \det (I-XX^\ast)^{-\frac{3d}{4}-\frac{1}{2}}\1_{\norm{(V^\ast)^{-1} X V}<1}
  \dint\la{4d}(X),
\end{equation*} 
  where $V=D_2^{1/2}D_1^{-1/2}$. 
By a similar argument, the last term depends only on  the singular value 
  ratio of $V$ hence it can be written as $\eta_d \circ \sigma (V)$.
As a result, for the volume we get
\begin{equation*}
\Vol_{\sqrt{x}}\gz{\DN{4,\K}^s}
=\int\limits_{\begin{array}{c}
  D_1,D_2>0 \\
  \Tr (D_1+D_2) = 1
  \end{array}} 
  2^{3d}\det (D_1D_2)^{\frac{d-2}{4}}\eta_d \circ \sigma (D_2^{1/2}D_1^{-1/2})
  \dint\la{2d+3}(D_1,D_2).
\end{equation*}
Using the parametrization \eqref{eq:param} and the substitution  
  $Y=D^{-1/2}AD^{-1/2}$, we obtain
\begin{equation*}
\Vol_{\sqrt{x}}\gz{\DN{4,\K}^s} =\int\limits_{\DN{2,\K}}
  \Vol_{\sqrt{x}}\gz{\DN{4,\K}^s(D)}\dint\la{d+1}(D),
\end{equation*}
  where
\begin{equation*}
\Vol_{\sqrt{x}}\gz{\DN{4,\K}^s(D)}
  =4\det(D)^{\frac{5}{2}d-\frac{d^2}{2}-1}\int\limits_{\E{2,\K}}
  \det (I-Y^2)^\frac{d-2}{4}\eta_d\circ\sigma 
  \gz{\sqrt{\frac{I-Y}{I+Y}}} \dint\la{d+2}(Y).
\end{equation*}
\end{proof}

\begin{corollary}\label{cor:Geo}
For a fixed density matrix $D\in\gz{\DN{2,\K},g_{\sqrt{x}}}$
  the probability to find a separable state in $\DN{4,\K}(D)$ is
\begin{equation}
\mathcal{P}_{sep,\sqrt{x}}(\K)=\int\limits_{\E{2,\K}} 
  \tilde{\eta}_d\circ\sigma\gz{\sqrt{\frac{I-Y}{I+Y}}}\dint\nu_{d+2}(Y), 
\end{equation}
  where
\begin{equation*}
\dint\nu_{d+2}(Y) = \frac{\det (I-Y^2)^\frac{d-2}{4}}{\int\limits_{\E{2,\K}}
  \det (I-Z^2)^\frac{d-2}{4}\dint\la{d+2}(Z)}\dint\la{d+2}(Y).
\end{equation*}
This probability is also independent from $D$ which means that the conjecture 
  of Milz and Strunz holds true for the statistical manifold 
  $\gz{\DN{4,\K},g_{\sqrt{x}}}$. 
\end{corollary}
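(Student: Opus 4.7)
The plan is to mirror the argument of Corollary~\ref{cor:HS}: compute the (formal) volume of the full fibre $\DN{4,\K}(D)$ by the same chain of substitutions used in the preceding Theorem, take the ratio with $\Vol_{\sqrt{x}}\gz{\DN{4,\K}^s(D)}$, and verify that every $\det(D)$-factor cancels.

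First I would redo the Theorem's calculation verbatim, but with the set $\mathcal{C}(D_1,D_2)$ replaced by $\kz{C\in\K^{2\times 2} : \rho(D_1,D_2,C) > 0}$, i.e.\ dropping the partial-transpose positivity constraint. After the substitution $X = D_1^{-1/2} C D_2^{-1/2}$, the inner $C$-integral becomes $\det(D_1 D_2)^d \int_{\BK} \det(I-XX^\ast)^{-3d/4-1/2}\1_{\norm{(V^\ast)^{-1} X V}<1}\dint\la{4d}(X)$, which by the singular-value-ratio argument in the Theorem equals $\det(D_1 D_2)^d\,\eta_d(\sigma(V))$ with $V=D_2^{1/2}D_1^{-1/2}$. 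When the separability constraint is dropped the indicator becomes trivial and the inner integral reduces to $\det(D_1 D_2)^d\,\eta_d(1)$. Following the parametrization \eqref{eq:param} and the substitution $Y = D^{-1/2}AD^{-1/2}$ exactly as in the Theorem then gives the formal identity
\begin{equation*}
\Vol_{\sqrt{x}}\gz{\DN{4,\K}(D)} = 4\,\eta_d(1)\,\det(D)^{\frac{5}{2}d-\frac{d^2}{2}-1}\int\limits_{\E{2,\K}}\det(I-Y^2)^{\frac{d-2}{4}}\dint\la{d+2}(Y).
\end{equation*}

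Dividing $\Vol_{\sqrt{x}}\gz{\DN{4,\K}^s(D)}$ from the Theorem by this expression, the entire $\det(D)^{\frac{5}{2}d-\frac{d^2}{2}-1}$ prefactor cancels and one is left with a ratio that only involves integrals over $\E{2,\K}$. Recognizing $\eta_d/\eta_d(1)$ as $\tilde\eta_d$ and repackaging the density $\det(I-Y^2)^{(d-2)/4}\dint\la{d+2}(Y)$ divided by its total mass as $\dint\nu_{d+2}(Y)$ produces precisely the stated formula for $\mathcal{P}_{sep,\sqrt{x}}(\K)$, which manifestly does not depend on $D$.

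The main obstacle is that $\eta_d(1)=\infty$ (Table~\ref{tab:normc}), so neither of the two volumes is actually finite and the cancellation above is only formal. To make it rigorous I would insert the regulator $\delta \in (0,1)$ from the definition of $\tilde\eta_d$ given just before Lemma~\ref{lem:eta}: replace the exponent $-\gz{\frac{3d}{4}-\frac{1}{2}}$ throughout by its regularized version $-\gz{\frac{3d}{4}-\frac{1}{2}}\delta$, which renders every integral finite, run the above argument at each $\delta$, and pass to the limit $\delta \to 1^-$ only at the very end. Since the normalized ratio defining $\tilde\eta_d(\varepsilon)$ is bounded above by $1$ uniformly in $\varepsilon$, and the outer measure $\det(I-Y^2)^{(d-2)/4}\dint\la{d+2}(Y)$ is integrable on $\E{2,\K}$ for both $d=1$ and $d=2$ (using the eigenvalue parametrization from Table~\ref{tab:detg}, the singularities at the boundary $\abs{x}, \abs{y}\to 1$ are integrable since the exponent is at worst $-1/4$), dominated convergence allows me to exchange the outer integral over $\E{2,\K}$ with the $\delta$-limit. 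This turns the formal cancellation into a bona fide identity and establishes both the claimed formula for $\mathcal{P}_{sep,\sqrt{x}}(\K)$ and the $D$-independence predicted by the Milz--Strunz conjecture in the $g_{\sqrt{x}}$ setting.
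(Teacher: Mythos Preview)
Your proposal is correct and follows essentially the same approach as the paper: compute the unconstrained fibre volume by the identical chain of substitutions, obtain the factor $4\eta_d(1)\det(D)^{\frac{5}{2}d-\frac{d^2}{2}-1}\int_{\E{2,\K}}\det(I-Y^2)^{\frac{d-2}{4}}\dint\la{d+2}(Y)$, and take the ratio. Your additional paragraph on the $\delta$-regularisation and dominated convergence makes explicit the passage through the formal $\infty/\infty$ that the paper leaves implicit in its definition of $\tilde\eta_d$; this is a welcome clarification but not a different method.
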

\begin{proof}
Similarly, one can show that
\begin{equation}
\label{eq:volcurv}
\Vol_{\sqrt{x}}\gz{\DN{4,\K}(D)}=
  4\eta_d (1)\det (D)^{\frac{5}{2}d-\frac{d^2}{2}-1}
  \times \int\limits_{\E{2,\K}}\det (I-Y^2)^\frac{d-2}{4} \dint\la{d+2}(Y)
\end{equation}
  then we take the ratio
\begin{equation*}
\mathcal{P}_{sep,\sqrt{x}}(\K)
=\frac{\Vol_{\sqrt{x}}\gz{\DN{4,\K}^s(D)}}
  {\Vol_{\sqrt{x}}\gz{\DN{4,\K}(D)}}
\end{equation*}
  and we get the desired result.
\end{proof}

Now, we are in the position to calculate the separability probability for
  rebit-rebit systems in this setting.

\begin{theorem}
The separability probability in the statistical manifold   
  $\gz{\DN{4,\R},g_{\sqrt{x}}}$ is
\begin{equation*}
\mathcal{P}_{sep,\sqrt{x}}(\R)=\int\limits_0^1
  \frac{8 \gz{8 \gz{t^4+t^2} E\gz{1-\frac{1}{t^2}}
  -\gz{t^2+3} \gz{3 t^2+1}K\gz{1-\frac{1}{t^2}}}}{\pi \sqrt{t} \gz{t^2-1}^3}
  \tilde{\chi}_1 (t) \dint t\approx 0.26223,
\end{equation*}
  where $K$ is the complete elliptic integral of the first kind
  and $E$ is the elliptic integral of the second kind, that is
\begin{equation*}
K(k)=\int_{0}^{1}\frac{1}{\sqrt{1-t^{2}}\sqrt{1-k^{2}t^{2}}}\dint t
\qquad\mbox{and}\qquad
E(k)=\int_{0}^{1}\frac{\sqrt{1-k^{2}t^{2}}}{\sqrt{1-t^{2}}}\dint t.
\end{equation*}
\end{theorem}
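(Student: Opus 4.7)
The plan is to reduce Corollary \ref{cor:Geo} to a one-dimensional integral in the variable $t$, which corresponds to the singular value ratio $\sigma\gz{\sqrt{(I-Y)/(I+Y)}}$. First I would specialize Corollary \ref{cor:Geo} to the real case $d=1$, obtaining
\begin{equation*}
\mathcal{P}_{sep,\sqrt{x}}(\R)=\int\limits_{\E{2,\R}} \tilde{\eta}_1\circ\sigma\gz{\sqrt{\frac{I-Y}{I+Y}}}\dint\nu_3(Y),
\end{equation*}
with $\dint\nu_3(Y)$ proportional to $\det(I-Y^2)^{-1/4}\dint\la{3}(Y)$. By Lemma \ref{lem:eta}, $\tilde{\eta}_1=\tilde{\chi}_1$, so the integrand may be replaced by $\tilde{\chi}_1\circ\sigma\gz{\sqrt{(I-Y)/(I+Y)}}$ and only the already-studied function $\tilde{\chi}_1$ remains.

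Next I would use the spectral decomposition $Y = O(\theta)\Lambda(x,y)O(\theta)^T$ with eigenvalues $x,y\in(-1,1)$. Both the integrand and the measure $\dint\nu_3$ are $O(2)$-invariant, so the angular variable $\theta$ factors out and cancels from the normalized probability. What remains is a two-dimensional integral over $-1<y<x<1$ with weight $((1-x^2)(1-y^2))^{-1/4}(x-y)$; on this region the singular value ratio becomes $\sqrt{(1-x)(1+y)/((1+x)(1-y))}$.

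I would then apply the same double change of variables that was used in the rebit-rebit proof: first $u=(1-x)/(1+x)$, $v=(1-y)/(1+y)$, mapping the domain to $0<u<v<\infty$, and then $u=ts$, $v=s/t$ with $0<t<1$ and $0<s<\infty$. Under this reparametrization the argument of $\tilde{\chi}_1$ reduces simply to $t$, so after swapping the order of integration and dividing by the corresponding normalizer one arrives at
\begin{equation*}
\mathcal{P}_{sep,\sqrt{x}}(\R)=\int\limits_0^1 \tilde{\chi}_1(t)\, H(t)\dint t,
\end{equation*}
where $H(t)$ is the (normalized) inner integral over $s\in(0,\infty)$.

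The main obstacle will be the explicit evaluation of $H(t)$. The inner integrand is a rational function of $s$ (carrying also a fractional power of $s$ arising from the $((1-x^2)(1-y^2))^{-1/4}$ weight), with denominator containing powers of $(s+t)$ and $(1+st)$. By a further substitution that brings the integral into Legendre normal form and by repeated application of the standard reduction identities satisfied by the complete elliptic integrals, the $s$-integral collapses to a linear combination of $K(1-1/t^2)$ and $E(1-1/t^2)$ of the announced shape
\begin{equation*}
H(t)=\frac{8\gz{8\gz{t^4+t^2}E\gz{1-\tfrac{1}{t^2}}-\gz{t^2+3}\gz{3 t^2+1}K\gz{1-\tfrac{1}{t^2}}}}{\pi\sqrt{t}\gz{t^2-1}^3}.
\end{equation*}
The numerical value $\mathcal{P}_{sep,\sqrt{x}}(\R)\approx 0.26223$ is then obtained by one-dimensional quadrature using the closed-form expression of $\tilde{\chi}_1$ provided by Lemma \ref{lem:chi}.
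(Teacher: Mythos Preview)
Your proposal is correct and follows essentially the same route as the paper: specialize Corollary~\ref{cor:Geo} to $d=1$, replace $\tilde{\eta}_1$ by $\tilde{\chi}_1$ via Lemma~\ref{lem:eta}, use orthogonal invariance to pass to the eigenvalue integral with weight $((1-x^2)(1-y^2))^{-1/4}(x-y)$, then perform the successive substitutions $u=(1-x)/(1+x),\ v=(1-y)/(1+y)$ and $u=ts,\ v=s/t$, and finally integrate out $s$ to obtain the elliptic-integral kernel $H(t)$. The paper records the normalizer $\int_{-1}^1\int_{-1}^x(1-x^2)^{-1/4}(1-y^2)^{-1/4}(x-y)\,dy\,dx=2\pi/3$ and the intermediate form $\int_0^\infty\int_0^1 8s^{3/2}\sqrt{t}(1-t^2)\tilde{\chi}_1(t)\,(s+t)^{-5/2}(1+st)^{-5/2}\,dt\,ds$, but otherwise the arguments coincide.
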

\begin{proof}
Due to the fact that $\tilde{\eta}_1=\tilde{\chi}_1$ 
  (See Lemma \ref{lem:eta} and Appendix \ref{App:AppendixB}.) 
  and by the unitary invariance, we can write
\begin{equation*}
\mathcal{P}_{sep,\sqrt{x}}(\R) =
  \frac{\int\limits_{-1}^1\int\limits_{-1}^x \tilde{\chi}_1 
  \gz{\sqrt{\frac{1-x}{1+x}}\times\sqrt{\frac{1+y}{1-y}}}
  (1-x^2)^{-\frac{1}{4}}(1-y^2)^{-\frac{1}{4}}(x-y)\dint y\dint x}
{\int\limits_{-1}^1\int\limits_{-1}^x 
  (1-x^2)^{-\frac{1}{4}}(1-y^2)^{-\frac{1}{4}}(x-y)\dint y\dint x},
\end{equation*}
  where the denominator is equal to $\frac{2\pi}{3}$. 
To evaluate the numerator, we use the same strategy that we have applied in 
  the Hilbert--Schmidt case. 
After the first substitution, the numerator gains the following form
\begin{equation*}
\int\limits_0^\infty\int\limits_0^v\tilde{\chi}_1 \gz{\sqrt{\frac{u}{v}}}
  \frac{4(v-u)}{(uv)^{\frac{1}{4}}(1+u)^\frac{5}{2}(1+v)^\frac{5}{2}}
  \dint u\dint v.
\end{equation*}
After the second substitution, we have
\begin{align*}
\int\limits_0^\infty\int\limits_0^v\tilde{\chi}_1 \gz{\sqrt{\frac{u}{v}}}
  \frac{4(v-u)}{(uv)^{\frac{1}{4}}(1+u)^\frac{5}{2}(1+v)^\frac{5}{2}}
  \dint u\dint v
=\int\limits_0^\infty\int\limits_0^1
  \frac{8s^{\frac{3}{2}}\sqrt{t}(1-t^2)\tilde{\chi}_1 (t)}
  {(s+t)^\frac{5}{2}(1+ts)^\frac{5}{2}}\dint t\dint s.
\end{align*}
We interchange the order of integration and obtain
\begin{equation*}
\int\limits_0^1
\frac{16 \gz{8 \gz{t^4+t^2} E\gz{1-\frac{1}{t^2}}
  -\gz{t^2+3} \gz{3 t^2+1}K\gz{1-\frac{1}{t^2}}}}{3 \sqrt{t} \gz{t^2-1}^3}
  \tilde{\chi}_1 (t)\dint t\approx 0.549213
\end{equation*}
  that can be evaluate only numerically.
For the separability probability, we have
\begin{equation*}
\mathcal{P}_{sep,\sqrt{x}}(\R)=\int\limits_0^1
  \frac{8 \gz{8 \gz{t^4+t^2} E\gz{1-\frac{1}{t^2}}
  -\gz{t^2+3} \gz{3 t^2+1}K\gz{1-\frac{1}{t^2}}}}{\pi \sqrt{t} \gz{t^2-1}^3}
  \tilde{\chi}_1 (t)\dint t\approx 0.26223
\end{equation*}
  which completes the proof.
\end{proof}

\section{Examples}

To verify the results, first we calculate the volume of $4\times 4$
  density matrices with respect to the standard Lebesgue measure.
As we mentioned in the proof of Corollary \ref{cor:HS}, the volume of
  $\DN{4,\K}$ can be expressed as
\begin{equation}
\Vol\gz{\DN{4,\K}}=\int\limits_{\DN{2,\K}}
  \Vol\gz{\DN{4,\K}(D)}\dint\la{d+1}(D)
\end{equation}
  which can be written in the following product form
\begin{equation}
\Vol\gz{\DN{4,\K}}=\frac{\chi_d(1)}{2^{6d}}\times 
  \int\limits_{\DN{2,\K}}\det (D)^{4d-\frac{d^2}{2}}\dint\la{d+1}(D)  
  \times\int\limits_{\E{2,\K}}\det (I-Y^2)^d\dint\la{d+2}(Y).
\end{equation}
In the real case we have
\begin{align*}
\chi_1(1) &= \frac{2}{3}\pi^2\\
\int\limits_{\DN{2,\R}}\det(D)^{\frac{7}{2}}\dint\la{2}(D)
  &=\frac{\pi}{2^7 3^2}\\
\int\limits_{\E{2,\R}}\det (I-Y^2)\dint\la{3}(Y)&=\frac{2^5\sqrt{2}\pi}{35}.
\end{align*}
In the complex case we have
\begin{align*}
\chi_2(1)&=\frac{\pi^4}{6}\\
\int\limits_{\DN{2,\C}}\det (D)^{6}\dint\la{3}(D) 
  &=\frac{\pi}{2\times 3^2\times 5\times 7\times 11\times 13\times\sqrt{2}}\\
\int\limits_{\E{2,\C}}\det (I-Y^2)^2\dint\la{4}(Y)
  &=\frac{2^{10}\pi}{3^2\times 5^2\times 7}.
\end{align*}
If we put all together, we get
\begin{align*}
\Vol\gz{\DN{4,\R}} 
  &=\frac{\pi^4}{\sqrt{2}\times 2^6\times 3^3\times 35} \\
\Vol\gz{\DN{4,\C}}
  &=\frac{\pi^6}{\sqrt{2}\times 2^{14}\times 3^4\times 5^3\times 7^2
  \times 11\times 13}
\end{align*}
  which is equal to the volume obtained by \.{Z}yczkowski and Sommers
  \cite{ZyczMixedStates} and Andai  (See Theorem 1 and 2 in \cite{AndaiVol}) 
  up to a factor that comes from the difference between the Lebesgue measure 
  and the Hilbert--Schmidt measure. 
Contrary to the $2\times 2$ case (See Corollary 1 in \cite{AndaiVol}),
  the volume of the statistical manifold $\gz{\DN{4,\K},g_{\sqrt{x}}}$
  is infinite in both of the real and complex cases because
  $\eta_d (1)=\infty$ (See Table \ref{tab:normc}.) and the
  volume admits the following factorization
\begin{equation*}
\Vol_{\sqrt{x}}\gz{\DN{4,\K}}=4\eta_d (1) \times
  \int\limits_{\DN{2,\K}}\det(D)^{\frac{5}{2}d-\frac{d^2}{2}-1}\dint\la{d+1}(D) 
  \times\int\limits_{\E{2,\K}}\det (I-Y^2)^\frac{d-2}{4}\dint\la{d+2}(Y).
\end{equation*}

\section{Conclusion}

The structure of the unit ball in operator norm of $2\times 2$ matrices plays a 
  critical role in separability probability of qubit-qubit and rebit-rebit 
  quantum systems.
It is quite surprising that the space of $2\times 2$ real or complex matrices 
  seems simple, but to compute the volume of the set
\begin{equation*}
\kz{\begin{pmatrix}a & b\\ c& d\end{pmatrix} \Bigm\vert\ 
\norm{\begin{pmatrix} a & b\\ c& d\end{pmatrix}}<1,\ \
\norm{\begin{pmatrix} a & \varepsilon b\\ \frac{c}{\varepsilon}& d
\end{pmatrix}}<1 }
\end{equation*}
  for a given parameter $\varepsilon\in\sz{0,1}$, which is the value of 
  the function $\chi_{d}(\varepsilon)$, is a very challenging problem.
The gist of our considerations is that the behavior of the function 
  $\chi_{d}(\varepsilon)$ determines the separability probabilities with respect
  to the Hilbert-Schmidt measure. 
When the volume form generated by the operator monotone function 
  $x\mapsto\sqrt{x}$, a reasonable normalization can be given to define the
  separability probability and in this case the probability is determined by the  
  structure of the surface of the unit ball.


\def\polhk#1{\setbox0=\hbox{#1}{\ooalign{\hidewidth
  \lower1.5ex\hbox{`}\hidewidth\crcr\unhbox0}}}


\newpage
\appendix

\section{Proof of Lemma \ref{lem:chi}} \label{App:AppendixA}

For local usage, we redefine the matrix 
  $\Lambda_\delta = \begin{pmatrix}1 & 0 \\0 & e^{-\delta}\end{pmatrix}$, 
  where $\delta>0$. 
Let us introduce the function 
  $\Delta (\delta)=\Vol(\BR)-\chi_1(e^{-\delta})$
  to which we will refer as a \emph{defect function}. 
In terms of $\Delta$, the statement of Lemma \ref{lem:chi} can be 
  reformulated as follows for every positive $\delta$
\begin{equation*}
\Delta (\delta)=\frac{16}{3}\int\limits_0^\delta
  \cosh t - \sinh^2 t\log \gz{\frac{e^t+1}{e^t-1}}\dint t.
\end{equation*}

First we fix $\delta >0$ and we cover the space of $2\times 2$ real matrices
  with the following atlas 
\begin{equation}
\mathcal{A}=\kz{X_{\pm}(r,t,\rho,\phi),X_{\pm}(r,t,\rho,\phi)\sigma_3},
\end{equation} 
  where
\begin{align}
X_{\pm}(r,t,\rho,\phi) &= r Y_{\pm}(t,\rho,\phi) \nonumber\\
Y_{\pm}(t,\rho,\phi)   &= 
\begin{pmatrix}
\sqrt{\rho}\cos\phi 
  & \pm\frac{\frac{\rho}{2}\sin 2\phi-1}
  {\sqrt{\abs{\frac{\rho}{2}\sin 2\phi-1}}}e^t\\
  \pm\sqrt{\abs{\frac{\rho}{2}\sin 2\phi-1}}e^{-t} 
  & \sqrt{\rho}\sin\phi
\end{pmatrix}, \label{parR}
\end{align}
  $t\in\R$, $r,\rho>0$ and $\phi\in\left[0,2\pi\right[$.
This parametrization is very convenient because the similarity transformation 
  by $\Lambda_\delta$ is just a translation
\begin{equation*}
(r,t,\rho,\phi) 
  \stackrel{\Lambda_\delta^{-1}(.)\Lambda_\delta}{\Longrightarrow} 
(r,t-\delta,\rho,\phi).
\end{equation*}
The metric tensor ($g$) corresponding to this parametrization ($X_\pm $) has 
  $10$ independent components.
\begin{align*}
g_{rr}    &= \rho + 2\cosh(2t) \abs{\frac{\rho}{2}\sin(2\phi)-1}\\
g_{rt}    &= 2r\sinh(2t) \abs{\frac{\rho}{2}\sin(2\phi)-1}\\
g_{r\rho} &= \frac{r}{2}\gz{1+\sin(2\phi)\cosh(2t)
  \sgn\gz{\frac{\rho}{2}\sin(2\phi)-1}}\\
g_{r\phi} &= r\rho\cos(2\phi)\cosh(2t) 
  \sgn\gz{\frac{\rho}{2}\sin(2\phi)-1}\\
g_{tt}    &= 2r^2\cosh(2t) \abs{\frac{\rho}{2}\sin(2\phi)-1}\\
g_{t\rho} &=\frac{r^2}{4}\sin(2\phi)\cosh(2t) 
  \sgn\gz{\frac{\rho}{2}\sin(2\phi)-1}\\
g_{t\phi} &= -r^2\rho\cos(2\phi)\sinh(2t) 
  \sgn\gz{\frac{\rho}{2}\sin(2\phi)-1}\\
g_{\rho\rho} &= \frac{r^2}{4}\gz{\frac{1}{\rho}+\frac{\cosh(2t)\sin^2(2\phi)}
  {2\abs{\frac{\rho}{2}\sin(2\phi)-1}}}\\
g_{\rho\phi} &= \frac{r^2\rho\cosh(2t)\sin(4\phi)}
  {8\abs{\frac{\rho}{2}\sin(2\phi)-1}}\\
g_{\phi\phi} &= r^2\rho\gz{1+\frac{\rho\cosh(2t)\cos^2(2\phi)}
  {\abs{\frac{\rho}{2}\sin(2\phi)-1}}}
\end{align*}
Although the metric tensor has a complicated form, the volume form is quite 
  simple
\begin{equation*}
\sqrt{\det (g(r,t,\rho,\phi))}=r^3.
\end{equation*}
We can write 
\begin{align*}
\chi_1 (e^{-\delta})&=\la{4}\gz{\BR\cap \Lambda_\delta^{-1}\BR\Lambda_\delta}\\
&=\int\limits_{\R^{2\times 2}}
  \1_{\{\norm{X}<1 \,\&\, \norm{\Lambda_\delta^{-1}X\Lambda_\delta}<1\}} 
  \dint\la{4}(X)\\
&= 2\int\limits_{-\infty}^\infty\int\limits_0^{2\pi}\int\limits_0^\infty
  \int\limits_0^\infty\1_{r<\min   \gz{\frac{1}{\norm{Y_{+}(t,\rho,\phi)}},
  \frac{1}{\norm{Y_{+}(t-\delta,\rho,\phi)}}}} r^3 
  \dint r \dint\rho \dint\phi \dint t \\
&+2\int\limits_{-\infty}^\infty\int\limits_0^{2\pi}\int\limits_0^\infty
  \int\limits_0^\infty\1_{r<\min\gz{\frac{1}{\norm{Y_{-}(t,\rho,\phi)}},
  \frac{1}{\norm{Y_{-}(t-\delta,\rho,\phi)}}}} r^3
\dint r \dint\rho \dint\phi \dint t.
\end{align*} 
Note that $Y_{\pm}(t,\rho,\phi)\in\text{SL}_2(\R)$ and by
  Lemma \ref{lem:svratio}, we have 
\begin{equation*}
\norm{Y_\pm (t,\rho,\phi)}= 
\exp\gz{\frac{1}{2}\cosh^{-1}\gz{\frac{\hsnorm{Y_\pm (t,\rho,\phi)}^2}{2}}},
\end{equation*} 
  where
\begin{equation*}
\hsnorm{Y_\pm (t,\rho,\phi)}^2
= 2\gz{\frac{\rho}{2}+\abs{\frac{\rho}{2}\sin(2\phi)-1}\cosh(2t)}
\end{equation*}
  which means 
\begin{equation*}
\hsnorm{Y_\pm (t-\delta,\rho,\phi)}^2>\hsnorm{Y_\pm (t,\rho,\phi)}^2
  \quad\text{if and only if}\quad
\abs{t-\delta}>\abs{t}\Leftrightarrow t<\delta/2.
\end{equation*}
With this observation, the previous integral can be written as
\begin{align*}
&\int\limits_{-\infty}^\infty\int\limits_0^{\pi}\int\limits_0^\infty
  \int\limits_0^{\infty}\1\gz{r<e^{-\frac{1}{2}\cosh^{-1}
  \max\gz{\frac{\hsnorm{Y_\pm (t,\rho,\phi)}^2}{2},
  \frac{\hsnorm{Y_\pm (t-\delta,\rho,\phi)}^2}{2}}}}  4r^3
  \dint r \dint\rho \dint\phi \dint t\\
&=\int\limits_{-\infty}^\infty\int\limits_0^{\pi}\int\limits_0^\infty
e^{-2\cosh^{-1}
  \max\gz{\frac{\hsnorm{Y_\pm (t,\rho,\phi)}^2}{2},
  \frac{\hsnorm{Y_\pm (t-\delta,\rho,\phi)}^2}{2}}}
  \dint\rho \dint\phi \dint t \\
&=\int\limits_{-\infty}^\frac{\delta}{2}\int\limits_0^{\pi}\int\limits_0^\infty
  e^{-2\cosh^{-1}\gz{\frac{\hsnorm{Y_\pm (t-\delta,\rho,\phi)}^2}{2}}}
  \dint\rho \dint\phi \dint t 
  +\int\limits_{\frac{\delta}{2}}^\infty\int\limits_0^{\pi}\int\limits_0^\infty
e^{-2\cosh^{-1}\gz{\frac{\hsnorm{Y_\pm (t,\rho,\phi)}^2}{2}}}
  \dint\rho \dint\phi \dint t \\
&=\underbrace{
  \int\limits_{-\infty}^\infty\int\limits_0^{\pi}\int\limits_0^\infty
  e^{-2\cosh^{-1}\gz{\frac{\hsnorm{Y_\pm (t,\rho,\phi)}^2}{2}}}
  \dint\rho \dint\phi \dint t}_{\frac{1}{2}\Vol(\BR)}
  -\underbrace{
  \int\limits_{-\frac{\delta}{2}}^\frac{\delta}{2}\int\limits_0^{\pi}
  \int\limits_0^\infty 
  e^{-2\cosh^{-1}\gz{\frac{\hsnorm{Y_\pm(t,\rho,\phi)}^2}{2}}}
  \dint\rho \dint\phi \dint t}_{\frac{1}{2}\Delta (\delta)},
\end{align*}
  where the last term can be written as
\begin{align}
\begin{split}\label{eq:halfdelta}
\frac{1}{2}\Delta (\delta) 
  &=\int\limits_{-\frac{\delta}{2}}^\frac{\delta}{2}\int\limits_0^{\pi}
  \int\limits_0^\infty 
  e^{-2\cosh^{-1}\gz{\frac{\hsnorm{Y_\pm (t,\rho,\phi)}^2}{2}}}
  \dint\rho \dint\phi \dint t \\
&=\int\limits_{-\frac{\delta}{2}}^\frac{\delta}{2}\int\limits_0^{2\pi}
  \int\limits_0^\infty
  e^{-2\cosh^{-1}\gz{\rho+\abs{\rho\sin\phi-1}\cosh(2t)}}
  \dint\rho \dint\phi \dint t \\
&=\int\limits_{0}^\delta \int\limits_0^{2\pi}\int\limits_0^\infty
  e^{-2\cosh^{-1}\gz{\rho+\abs{\rho\sin\phi-1}\cosh t}}
  \dint\rho \dint\phi \dint t.
\end{split}
\end{align}
We decompose the inner double integral in the following way
\begin{align*}
\int\limits_0^{2\pi}\int\limits_0^\infty &
  e^{-2\cosh^{-1}\gz{\rho+\abs{\rho\sin\phi-1}\cosh t}}
  \dint\rho \dint\phi 
=\int\limits_0^{\pi}\int\limits_0^\infty 
  e^{-2\cosh^{-1}\gz{\rho+\gz{\rho\sin\phi+1}\cosh t}}\dint\rho \dint\phi \\
&+\int\limits_0^{\pi}\gz{\int\limits_0^\frac{1}{\sin\phi}
  e^{-2\cosh^{-1}\gz{\rho-\gz{\rho\sin\phi-1}\cosh t}}\dint\rho
  +\int\limits_{\frac{1}{\sin\phi}}^\infty
  e^{-2\cosh^{-1}\gz{\rho +\gz{\rho\sin\phi-1}\cosh t}}\dint\rho}\dint\phi.
\end{align*}
After some manipulation with the inner integrals we have
\begin{align*}
&\int\limits_0^{2\pi}\int\limits_0^\infty
  e^{-2\cosh^{-1}\gz{\rho+\abs{\rho\sin\phi-1}\cosh t}}\dint\rho \dint\phi
=\int\limits_0^{\pi}\frac{1}{1+\cosh t\sin\phi}\int\limits_t^\infty
  e^{-2u}\sinh u\dint u\dint\phi \\
&+\int\limits_0^{\pi}\gz{\frac{1}{1-\cosh t\sin\phi}
  \int\limits_t^{\cosh^{-1}\gz{\frac{1}{\sin\phi}}}e^{-2u}\sinh u\dint u
  +\frac{1}{1+\cosh t\sin\phi}
  \int\limits_{\cosh^{-1}\gz{\frac{1}{\sin\phi}}}^\infty e^{-2u}\sinh u\dint u}
  \dint\phi \\
&=2\int\limits_0^\frac{\pi}{2}\gz{e^{-t}-\frac{e^{-3t}}{3}
  -\gz{\tan\frac{\phi}{2}-\frac{1}{3}\tan^3\frac{\phi}{2}}\cosh t\sin\phi}
  \frac{1}{1-\cosh^2 t\sin^2\phi}\dint\phi,
\end{align*}
  where we applied the identity 
  $\exp\gz{-\cosh^{-1}\gz{\frac{1}{\sin\phi}}}=\tan\frac{\phi}{2}$. 
Now we substitute $\tan\frac{\phi}{2}=e^{-s}$ and we get
\begin{align*}
2\int\limits_0^\infty 
& \gz{e^{-t}-\frac{e^{-3t}}{3}-\gz{e^{-s}-\frac{e^{-3s}}{3}}
  \frac{\cosh t}{\cosh s}}
  \frac{1}{1-\gz{\frac{\cosh t}{\cosh s}}^2}\frac{1}{\cosh s}\dint s \\
&=\frac{8}{3}\int\limits_0^\infty 
  e^{-t}\cosh s-\frac{\sinh^2 s}{\sinh (t+s)}\dint s 
  =\frac{8}{3}\gz{\cosh t - \sinh^2 t\log \gz{\frac{e^t+1}{e^t-1}}}.
\end{align*}

For the defect function, we gain the following formula
\begin{equation}\label{defect}
\Delta (\delta)=\frac{16}{3} \int\limits_0^\delta
  \cosh t - \gz{\sinh^2 t}\log \gz{\frac{e^t+1}{e^t-1}}\dint t
\end{equation}
  which completes the proof.

\section{Proof of Lemma \ref{lem:eta}}\label{App:AppendixB}

We cover the manifold $\partial \BR$ with the following atlas
\begin{equation}
\mathcal{A}=\gz{
  \frac{Y_{\pm}(t,\rho,\phi)}{\norm{Y_{\pm}(t,\rho,\phi)}},
  \frac{Y_{\pm}(t,\rho,\phi)\sigma_3}{\norm{Y_{\pm}(t,\rho,\phi)\sigma_3}}},
\end{equation}
  where $Y_{\pm}(r,t,\rho,\phi)$ is given by \eqref{parR} and $\norm{\cdot}$ 
  denotes the usual operator norm. 
Direct computation of the volume form from this parametrization would be a
  cumbersome task even for computer algebra systems.

It is obvious that the metric tensor has the same form on every element of
  $\mathcal{A}$ hence it is enough to deal with the parametrization
\begin{equation*}
X(t,\rho,\phi)=\frac{Y(t,\rho,\phi)}{\norm{Y(t,\rho,\phi)}},
\end{equation*}
  where $Y(t,\rho,\phi):=Y_{+}(t,\rho,\phi)$. 
Recall the fact that $Y(t,\rho,\phi)\in\text{SL}_2(\R)$ and by 
  Lemma \ref{lem:svratio}, we have
\begin{equation}
X(t,\rho,\phi) = f(t,\rho,\phi) Y(t,\rho,\phi),
\end{equation} 
  where 
\begin{equation}
f(t,\rho,\phi)
  =\exp \gz{-\frac{1}{2}\cosh^{-1}\gz{\frac{\hsnorm{Y(t,\rho,\phi)}^2}{2}}}.
\end{equation}

The metric tensor $(g)$ corresponding to this parametrization can be written as
\begin{align*}
\frac{1}{f^2} g_{ij}
&=\frac{1}{f^2} \left\langle\partial_i X, \partial_j X\right\rangle 
  =\gz{\partial_i \log (f)}\gz{\partial_j\log (f)}\hsnorm{Y}^2 \\
&+\frac{1}{2}\gz{\gz{\partial_i \log (f)} \gz{\partial_j \hsnorm{Y}^2}
  +\gz{\partial_j \log (f)} \gz{\partial_i \hsnorm{Y}^2}}
  +\left\langle\partial_i Y, \partial_j Y\right\rangle,
\end{align*}
  where $\left\langle , \right\rangle$ denotes the usual Hilbert--Schmidt 
  scalar product. 
By the chain rule, the metric tensor can be written in the following 
  convenient form
\begin{equation}
g=f^2 \gz{G+\gz{\hsnorm{Y}^2\gz{h'\gz{\hsnorm{Y}^2}}^2
  +h'\gz{\hsnorm{Y}^2}}\times 
  \nabla \gz{\hsnorm{Y}^2}\nabla \gz{\hsnorm{Y}^2}^T},
\end{equation}
  where  $G_{ij}=\left\langle\partial_i Y, \partial_j Y\right\rangle$ 
  and $h(r)=-\frac{1}{2}\cosh^{-1}\gz{\frac{r}{2}}$.

According to the matrix determinant lemma, we have
\begin{equation*}
\det(g)=f^6\det(G)\times 
  \gz{1+\gz{\hsnorm{Y}^2\gz{h'\gz{\hsnorm{Y}^2}}^2+h'\gz{\hsnorm{Y}^2}}
  \nabla \gz{\hsnorm{Y}^2}^TG^{-1}\nabla \gz{\hsnorm{Y}^2}}
\end{equation*}
  where all the factors can be directly evaluated. 
We obtain the following nice form for the volume form
\begin{equation}
\sqrt{\det (g)}=f^4=\exp\gz{-2\cosh^{-1}\gz{\frac{\hsnorm{Y}^2}{2}}}.
\end{equation}

Using the notations introduced in Appendix \ref{App:AppendixA}, we can write
\begin{align*}
\tilde{\eta}_1 (e^{-\delta})
&=4\int\limits_{-\infty}^\infty\int\limits_0^{2\pi}\int\limits_0^\infty
\frac{e^{-2\cosh^{-1}\gz{\frac{\hsnorm{Y(t,\rho,\phi)}^2}{2}}}}
  {\Vol(\partial\BR)}
  \1_{\norm{Y(t-\delta,\rho,\phi)}<\norm{Y(t,\rho,\phi)}}
  \dint\rho\dint\phi\dint t \\
&=\frac{4}{\Vol(\partial\BR)} \int\limits_{\frac{\delta}{2}}^\infty\int\limits_0^{2\pi}\int\limits_0^\infty
  e^{-2\cosh^{-1}\gz{\frac{\hsnorm{Y(t,\rho,\phi)}^2}{2}}}
  \dint\rho\dint\phi\dint t \\
&=1-\frac{4}{\Vol(\partial\BR)} 
  \underbrace{ \int\limits_0^\delta\int\limits_0^{2\pi}\int\limits_0^\infty
  e^{-2\cosh^{-1}\gz{\rho+|\rho\sin\phi-1|\cosh t}}
  \dint\rho\dint\phi\dint t}_{
  \frac{1}{2}\Delta (\delta)\,\,\text{(See \eqref{eq:halfdelta}.)}},
\end{align*}
  where we applied the following identities
\begin{align*}
\Lambda_\delta^{-1} Y(t,\rho,\phi) \Lambda_\delta &= Y(t-\delta,\rho,\phi) \\
\norm{Y(t-\delta,\rho,\phi)}<\norm{Y(t,\rho,\phi)} &\Leftrightarrow t>\delta /2.
\end{align*}
So, we have
\begin{equation*}
\tilde{\eta}_1 (\varepsilon)=1-\frac{2\Vol(\BR)}
  {\Vol(\partial\BR)}\gz{1-\tilde{\chi}_1 (\varepsilon)}
\end{equation*}
  which implies $\tilde{\eta}_1 (\varepsilon)=\tilde{\chi}_1 (\varepsilon)$
  for $\varepsilon\in [0,1]$ because $\tilde{\eta}_1 (0)=\tilde{\chi}_1 (0)=0$
  and $\tilde{\eta}_1 (1)=\tilde{\chi}_1 (1)=1$.

\end{document}